\theoremstyle{plain}
\newtheorem{thm}{Theorem}
\theoremstyle{definition}
\newcommand{\eq}[1]{(\hyperref[eq:#1]{\ref*{eq:#1}})}
\renewcommand{\sec}[1]{\hyperref[sec:#1]{Section~\ref*{sec:#1}}}
\newcommand{\thrm}[1]{\hyperref[thrm:#1]{Theorem~\ref*{thrm:#1}}}
\newcommand{\lemm}[1]{\hyperref[lemm:#1]{Lemma~\ref*{lemm:#1}}}
\newcommand{\prop}[1]{\hyperref[prop:#1]{Proposition~\ref*{prop:#1}}}
\newcommand{\corr}[1]{\hyperref[corr:#1]{Corollary~\ref*{corr:#1}}}
\newcommand{\fig}[1]{\hyperref[fig:#1]{~\ref*{fig:#1}}}
\newcommand{\deff}[1]{\hyperref[deff:#1]{~\ref*{deff:#1}}}
\newcommand{\mA}{\mathcal{A}}
\newcommand{\mE}{\mathcal{E}}
\newcommand{\mU}{\mathcal{U}}
\newcommand{\mT}{\mathcal{T}}
\newcommand{\mD}{\mathcal{D}}
\newcommand{\mI}{\mathcal{I}}
\newcommand{\mF}{\mathcal{F}}
\newcommand{\mH}{\mathcal{H}}
\newcommand{\mO}{\mathcal{O}}
\newcommand{\mB}{\mathcal{B}}
\newcommand{\mK}{\mathcal{K}}
\newcommand{\mP}{\mathcal{P}}
\newcommand{\mS}{\mathcal{S}}
\newcommand{\mX}{\mathcal{X}}
\newcommand{\mY}{\mathcal{Y}}
\newcommand{\mZ}{\mathcal{Z}}
\newcommand{\mbI}{\mathbb{I}}
\newcommand{\mbH}{\mathbb{H}}
\newcommand{\mbR}{\mathbb{R}}
\newcommand{\mV}{\mathcal{V}}
\renewcommand{\*}{\textup{*}}
\DeclareMathOperator{\cone}{cone}
\DeclareMathAlphabet{\matheu}{U}{eus}{m}{n}
\DeclareMathOperator{\Tr}{Tr}
\DeclareMathOperator{\id}{id}
\DeclareMathOperator{\sgn}{sgn}
\DeclareMathOperator{\tot}{tot}
\newcommand{\ketbra}[2]{|{#1}\rangle\!\langle{#2}|}
\newcommand{\ba}{\begin{eqnarray}}
\newcommand{\ea}{\end{eqnarray}}
\newcommand{\bann}{\begin{eqnarray*}}
\newcommand{\eann}{\end{eqnarray*}}
\newcommand{\bal}{\begin{equation}\begin{aligned}}
\newcommand{\eal}{\end{aligned}\end{equation}}
\newcommand{\dm}[1]{\ketbra{#1}{#1}}
\newcolumntype{L}[1]{>{\raggedright}p{#1}}
\newcolumntype{C}[1]{>{\centering}p{#1}}
\newcolumntype{R}[1]{>{\raggedleft}p{#1}}
\newcolumntype{D}{>{\centering\arraybackslash}X}
\renewcommand{\*}{\textup{*}}
\newcommand{\sbar}{\;\rule{0pt}{9.5pt}\right|\;}
\newcommand{\lset}{\left\{\left.}
\newcommand{\rset}{\right\}}
\begin{document}

\title{Optimal resource cost for error mitigation}

\author{Ryuji Takagi}
\email{ryuji.takagi@ntu.edu.sg}
\affiliation{Nanyang Quantum Hub, School of Physical and Mathematical Sciences, Nanyang Technological University, 637371, Singapore}
\affiliation{Center for Theoretical Physics and Department of Physics, Massachusetts Institute of Technology, Cambridge, MA 02139, USA}
\affiliation{QunaSys Inc., Aqua Hakusan Building 9F, 1-13-7 Hakusan, Bunkyo, Tokyo 113-0001, Japan}

\begin{abstract}
One of the central problems for near-term quantum devices is to understand their ultimate potential and limitations. We address this problem in terms of quantum error mitigation by introducing a framework taking into account the full expressibility of near-term devices, in which the optimal resource cost for the probabilistic error cancellation method can be formalized. We provide a general methodology for evaluating the optimal cost by connecting it to a resource-theoretic quantifier defined with respect to the noisy operations that devices can implement. We employ our methods to estimate the optimal cost in mitigating a general class of noise, where we obtain an achievable cost that has a generic advantage over previous evaluations, as well as a fundamental lower bound applicable to a broad class of noisy implementable operations. We improve our bounds for several noise models, where we give the exact optimal costs for the depolarizing and dephasing noise, precisely characterizing the overhead cost while offering an operational meaning to the resource measure in terms of error mitigation. Our result particularly implies that the heuristic approach presented by Temme \textit{et al}. [K. Temme, S. Bravyi, and J. M. Gambetta, Phys. Rev. Lett. 119, 180509 (2017)] is optimal even in our extended framework, putting fundamental limitations on the advantage provided by the extra degrees of freedom inherent in near-term devices for this noise model.  
\end{abstract}

\maketitle

\section{Introduction}

Recent technological developments push us toward the realization of quantum information processing in a fully controlled manner, and a near-term cornerstone is to make use of noisy intermediate-scale quantum (NISQ) devices, which focus on manipulating tens to hundreds of qubits without a full error correction~\cite{Preskill2018quantumcomputingin,McArdle2020quantum,arute_quantum_2019}.
However, whether these near-term devices can provide advantages in useful applications is still elusive.
In particular, rigorous theoretical analysis on the ultimate potential and limitations of NISQ devices has been largely missing.

In this work, we address this problem in terms of the capability to fight against noise. 
To deal with the critical noise effect without implementing an error-correcting code that is out of reach for the current technology, various error mitigation protocols have been proposed~\cite{Temme2017error, Li2017efficient, McClean2017subspace,Bonet2018lowcost, Endo2018practical,McArdle2019digital}. 
Among them, the probabilistic error cancellation method~\cite{Temme2017error,Endo2018practical,Sun2020continuous,Song2019exp,Zhang2020exp,Czarnik2020clifford, Strikis2020learning,Zlokapa2020deep} stands as a promising candidate, as it can construct an unbiased estimator that faithfully estimates the expectation value of an observable under a known error model. 
Probabilistic error cancellation is an application of a more general technique known as quasiprobability sampling~\cite{Troyer2005sign,Pashayan2015estimating,Howard2017application,Seddon2020quantifying,Buscemi2013twopoint,Buscemi2014twopoint}, and the relevant resource cost is the sampling overhead---the number of samples necessary to ensure a certain accuracy---characterized by how much negative portion the quasiprobability has. 
Thus, the capability of a given noisy device to mitigate errors with this method can be characterized by the minimum negativity in the quasiprobability, providing the optimal resource cost at which the probabilistic error cancellation can be run.

However, the ``optimal'' cost is not well-defined by itself.
Since quasiprobability decomposition expresses a given gate as a linear combination of noisy operations implementable on a noisy device, the optimal cost can highly depend on the choice of implementable operations. 
Many previous studies---not only on error mitigation but on NISQ algorithms in general---chose Pauli and Clifford gates as building blocks for the implementable operations~\cite{McArdle2020quantum}. 
As for the probabilistic error cancellation, Clifford operations combined with a target unitary gate followed by a noise channel were heuristically considered for depolarizing and amplitude damping noise~\cite{Temme2017error}, being extended to a complete set of Clifford-based operations applicable to a wider class of noise models~\cite{Endo2018practical}.
However, the necessity of considering the Pauli and Clifford gates has been rarely asked. 
It surely makes sense to give them special status in the fault-tolerant quantum computation with error-correcting codes where \emph{logical} operations take place in a code space, for which Clifford operations admit simple logical gate constructions~\cite{Steane1996error,Shor1995scheme,Fowler2012surface}.
On the other hand, for NISQ devices that do not implement error-correcting codes, there is no clear reason for the Clifford gates to be preferred over non-Clifford gates at the level of \textit{physical} operations on unencoded qubits.
In this sense, NISQ devices are endowed with extra degrees of freedom, and this flexibility should be fully exploited so that their potential and limitations can be properly gauged. 

Here, we introduce a framework that incorporates the full expressibility of noisy near-term devices that do not assume error correction. 
Our framework formalizes the optimal resource cost for the probabilistic error cancellation with respect to a continuous set of implementable operations, reflecting the flexibility of the NISQ devices. 
This consideration, however, also raises a demanding theoretical problem; since we need to take into account infinitely many implementable operations, obtaining the optimal resource cost can be extremely challenging.   
We address this problem by relating the optimal cost to a quantity studied in resource theories~\cite{Chitambar2019resource}.
A major pillar of resource theories is the quantification of resources, and previous studies indicated intimate relations between quasiprobability representation and resource quantifiers~\cite{Vidal2002computable,  Takagi2018convex, Albarelli2018resource, Tan2020negativity}. 
In particular, a resource quantifier known as \textit{robustness measure}~\cite{Vidal1999robustness}, has found several applications in the context of simulating quantum circuits~\cite{Howard2017application, Seddon2020quantifying} and quantum memory~\cite{Yuan2019memory}.
We consider our set of implementable operations as the accessible free resource and find that the optimal mitigation cost can be characterized by the robustness measure defined in our resource-theoretic framework, which can be evaluated by leveraging tools in general convex resource theories~\cite{Takagi2019operational,Takagi2019general}.

Then, we employ our method to obtain universal bounds for optimal error mitigation cost for a general noise model, showing that our framework provides a generic advantage over previous evaluations based on a discrete set of implementable operations, while placing a fundamental lower bound that must be observed by any device whose implementable operations are subsumed by the one introduced in our framework.   
We also study several specific noise models, for which we find that our bounds can be improved.
Notably, our methods provide \textit{exact} optimal costs for depolarizing and dephasing noise channels, precisely characterizing the error mitigation capability of noisy devices while offering an operational meaning to the robustness measure in the context of error mitigation.
Our result particularly indicates that the heuristic decomposition for the depolarizing channel given in Ref.~\cite{Temme2017error} is still optimal even in our extended framework, putting fundamental limitations on the enhancement enabled by a continuous set of implementable operations for this specific noise model.  

Our results not only provide a systematic way to evaluate the ultimate resource cost for error mitigation crucial for running useful algorithms in practice, but also present an application of ideas in resource theories to address concrete problems~\cite{Howard2017application,Halpern2020photoisomer,Takagi2018skew,Fang2019nogo,Takagi2020application,Zhou2020errorcorrection}, paving the way for a rigorous information-theoretical analysis of noisy near-term devices.

\section{Preliminaries}

A purpose of quantum computation, in particular for many variational algorithms designed for near-term devices, is to obtain an expectation value $\left<A\right>_{\rm ideal}=\Tr[A\rho_f]$, where $A$ is some observable and $\rho_f = \mU_L\circ\dots\circ\mU_1(\rho_i)$ is a final quantum state with input state $\rho_i$ fed into a quantum circuit composed of a sequence of unitary gates, $\{\mU_i\}_{i=1}^L$. (The curly letter refers to a unitary gate as a quantum channel, i.e., $\mU(\rho):=U\rho U^\dagger$.) 
However, since the application of quantum gates necessarily suffers from noise, the ideal gates $\{\mU\}_{i=1}^L$ are not directly implementable on noisy devices. 
Instead, one can consider a set of noisy operations, $\mI_\mE$, for some noise channel $\mE$ that is implementable on the device of interest. 
The idea of the probabilistic error cancellation method is to represent each quantum gate as a linear combination of the noisy implementable operations as $
\mU_i = \sum_\alpha \eta_{i,\alpha} \mO_\alpha, \mO_\alpha\in\mI_\mE$,
where $\eta_{i,\alpha}$ is a (not necessarily positive) real number.
Then, for each gate $\mU_i$ we sample an implementable operation $\mO_\alpha$ at probability $|\eta_{i,\alpha}|/\sum_\alpha|\eta_{i,\alpha}|$, prepare a state $\tilde\rho_f = \mO_{\alpha_L}\circ\dots\circ\mO_{\alpha_1}(\rho_i)$ where $\mO_{\alpha_i}$ is the implementable operation sampled for $\mU_i$, and measure the observable $A$. 
Then, letting $\gamma_i:=\sum_\alpha |\eta_{i,\alpha}|$, $\gamma_{\tot}:=\prod_{i=1}^L \gamma_i$, and $\sgn_{\tot}:=\prod_{i=1}^L\sgn(\eta_{i,\alpha_i})$, one can check that this realizes an unbiased estimator of $\left<A\right>_{\rm ideal}$ as $\left<A\right>_{\rm ideal}=\left<\gamma_{\tot}\sgn_{\tot}\mu(A)\right>_{\rm samp}$, where $\mu(A)$ is a random variable for the measurement outcome and $\left<\cdot\right>_{\rm samp}$ refers to the expectation value for the sampling average taken for the above procedure.

Although it gives the desired expectation value, canceling the noise comes with a cost: one needs to pay more sampling cost than would be needed to estimate the desired expectation value with a noiseless circuit within the same accuracy.
The Hoeffding's inequality~\cite{Hoeffding1963probability} ensures that a sufficient number of samples used for estimating the true expectation value with error $\delta$ at probability $1-\varepsilon$ is given by $(2\gamma_{\tot}^2/\delta^2)\ln(2/\varepsilon)$. 
Thus, having small $\gamma_{\tot}$ is crucial to suppress the sampling overhead and, since $\gamma_{\rm tot}$ grows exponentially with respect to the number of gates, the problem is reduced to finding a good linear decomposition of each ideal gate $\mU_i$ with respect to implementable operations $\mI_\mE$ with small $\gamma_i$. 

Clearly, the best linear decomposition depends on the choice of $\mI_\mE$, and it has been heuristically chosen on a case-by-case basis.
For instance, for the depolarizing noise model $\mD_{d,\epsilon}$ where $d$ is the dimension of the system and $\epsilon$ is the noise strength, $\mI_{\mD_{d,\epsilon}}=\{\mD_{d,\epsilon}\circ\mP\circ\mU\}$ where $\mP$ is a Pauli operator was considered, while for the single-qubit amplitude damping channel $\mA_\epsilon$, a set of implementable operations that works for the linear decomposition was found to be $\mI_{\mA_\epsilon}=\{\mA_\epsilon\circ\mU, \mA_\epsilon\circ\mZ^{1/2}\circ\mU,\, \mA_\epsilon\circ\mZ^{-1/2}\circ\mU,\, \mP_{\ket{0}}\}$ where $\mZ^{1/2}(\cdot)=Z^{1/2}\cdot {Z^{1/2}}^\dagger$ with $Z^{1/2}\coloneqq {\rm diag}(1,i)$ the phase gate and $\mP_{\ket{0}}$ the preparation of state $\ket{0}$~\cite{Temme2017error}. 
Later, this idea was extended to a Clifford-based universal basis set that works for any noise model with a sufficiently small noise strength~\cite{Endo2018practical}.

\section{Framework}

Although the above sets of operations can realize \textit{some} decomposition, there is no guarantee that these choices of $\mI_\mE$ achieve the smallest overhead $\gamma_i$ among other possible choices of the set of implementable operations. 
To assess the ultimate potential and limitations of the devices' capabilities having access to a continuous set of physical operations, we need to give more freedom to noisy devices with their programmability, i.e. the set of implementable operations under the noiseless condition.   
Motivated by this observation, we introduce the set of implementable operations as 
\bal
 \mI_\mE(d)=\lset  \mE\circ\Lambda \sbar \Lambda\in \tilde\mP(d)\rset
\label{eq:implementable set}
\eal
for a given noise channel $\mE\in\mT(d)$ where $\mT(d)$ is the set of completely positive trace preserving (CPTP) maps with input and output systems being $d$-dimensional quantum systems, and
$\tilde\mP(d)$ is the set of programmable operations defined as
\bal
 \tilde\mP(d)= \lset \sum_i p_i \mV_i\sbar \mV_i\in \mT_u(d)\cup \mS(d)\rset 
\eal
where $\{p_i\}$ is a probability distribution, $\mT_u(d)$ is the set of unitary channels on $d$-dimensional systems, and $\mS(d)=\{\mP_{\ket{\psi}}|\ket{\psi}\in\mH_d\}$ with $\mH_d$ being the $d$-dimensional Hilbert space is the set of state preparation channels.
The operations in $\mI_\mE(d)$ are implementable on the devices that can program any unitary gate and state preparation under the effect of noise $\mE$, which is reasonable for small $d$ such as $d=2$ (single qubit) and $d=4$ (two qubits), and even if a given device is not powerful enough to realize the above operations, our results serve as its ultimate bounds.
Note that although different operations may come with different noise channels in general, here we take a fixed noise channel after operations with the same size, which is a standard assumption in the quantitative analysis of noise effects~\cite{Temme2017error,Chamberland2016threshold,Fowler2012surface,Takagi2017error}; extending our formalism to accommodate different noise channels is left for future work. 

\section{Resource cost as robustness}

Our goal is to find a decomposition for a given ideal unitary gate $\mU\in\mT_u(d)$ with respect to the set of implementable operations \eqref{eq:implementable set} with the minimum absolute sum of the coefficients.
Namely, the optimal overhead constant $\gamma_{\rm opt}(\mU)$ is written as
\bal
 \gamma_{\rm opt}(\mU) = \min\lset\sum_\alpha|\eta_\alpha| \sbar\mU = \sum_\alpha \eta_\alpha\mO_\alpha,\ \mO_\alpha\in\mI_\mE(d)\rset
 \label{eq:gamma opt definition}
\eal
where we assume $\gamma_{\rm opt}(\mU)<\infty$. As noted in Ref.~\cite{Temme2017error},  Eq.~\eqref{eq:gamma opt definition} becomes a linear program when $\mI_\mE(d)$ is a discrete set.
However, because we aim to exploit the full expressibility of the device and consider the continuous set $\mI_\mE(d)$ given in \eqref{eq:implementable set}, Eq.~ \eqref{eq:gamma opt definition} is no longer simple linear programming; if one obtains some valid decomposition of $\mU$, it is hard to see whether there exists another decomposition that gives a smaller overhead constant. 

Here, we provide a general strategy to approach this problem. 
First, we introduce the following quantity
\bal
 R_{\mI_\mE}(\mU):= \min\lset s\geq 0 \sbar \frac{\mU + s \Xi}{1+s}\in \mI_\mE,\ \Xi\in\mI_\mE\rset.
 \label{eq:robustness primal}
\eal
This type of quantities defined for quantum states have been used to quantify the amount of quantum resources (e.g., entanglement) contained in the given state and is known as the (standard) robustness measure in resource theories.
In particular, the quantity in \eqref{eq:robustness primal} can be considered as the robustness measure defined for quantum operations in the context of the resource theory of channels~\cite{Liu2020operational,liu2019resource,Takagi2019general}, where $\mI_\mE$ serves as the set of free channels. 
Then, one can show that \eqref{eq:robustness primal} is equivalent to \eqref{eq:gamma opt definition} with the relation
\bal
 \gamma_{\rm opt}(\mU) = 2R_{\mI_\mE}(\mU) + 1,
\label{eq:gamma and robustness}
\eal
which we explain in Appendix~\ref{app:gamma and robustness}. (See also related arguments considered for other settings in Refs.~\cite{Howard2017application, Yuan2019memory, Regula2017convex}.)
Since $\mI_\mE(d)$ is a convex set, i.e. $\mO_1,\mO_2\in\mI_\mE(d)\Rightarrow p\mO_1+(1-p)\mO_2\in\mI_\mE(d),\ 0\leq p\leq 1$,
we can bring up ideas and tools developed in general convex resource theories~\cite{Takagi2019operational,Takagi2019general} and apply them to our resource measure in \eqref{eq:robustness primal}.
In particular, we obtain the following dual form of the robustness (details are given in Appendix~\ref{app:robustness dual}):
\bal
 R_{\mI_\mE}(\mU)= \max\lset \Tr[YJ_\mU] - 1 \sbar 
 0\leq \Tr[YJ_\Xi]\leq 1,\right.\\
 \left.\forall \Xi \in \mI_\mE,\ Y\in\mbH\rset
 \label{eq:robustness dual}
\eal
where 
$J_\Lambda:= \id\otimes \Lambda(d\cdot\Phi_d)$ with $\Phi_d = \frac{1}{d}\sum_{i,j=0}^{d-1}\ketbra{ii}{jj}$
is the Choi matrix of channel $\Lambda$ and $\mbH$ denotes the set of Hermitian operators. 
The above dual form is known to provide operational meanings to robustness measures, and it particularly indicates that the robustness is physically observable~\cite{Takagi2019general,Yuan2019memory}. 
Furthermore, these two expressions provide useful bounds for the optimal resource cost as
\bal
 2\Tr[YJ_\mU]-1 \leq \gamma_{\rm opt}(\mU) \leq 2s+1
 \label{eq:gamma optimal bound}
\eal
where $Y\in\mbH$ and $s\geq 0$ are any Hermitian operator and real number satisfying the condition in \eqref{eq:robustness primal} and \eqref{eq:robustness dual}.
In addition, we show in Appendix~\ref{app:systematic lower bound} that if we find a decomposition $\mU = \sum_\alpha \eta_\alpha \mO_\alpha=\sum_\alpha \eta_\alpha \mE\circ\mV_\alpha$, which gives an upper bound in \eqref{eq:gamma optimal bound}, then $Y=d^{-2}J_{{\mE^{-1}}^\dagger\circ\mU}$ with $\mE^{-1}:=\sum_\alpha\eta_\alpha\mV_\alpha\circ\mU^\dagger$ satisfies the condition in \eqref{eq:robustness dual} and provides $2\Tr[\Phi_d\id\otimes\mE^{-1}(\Phi_d)]-1$ as a candidate for a good lower bound.  

\section{Evaluation of resource cost}

The generality of our method allows us to obtain bounds for the optimal resource cost for general noise channels. 
To see this, let us consider the noise channels of the form $\mE = (1-\epsilon)\id + \epsilon_+\Lambda - \epsilon_-\Xi$ where $\epsilon, \epsilon_\pm\geq 0$, $\Lambda,\Xi\in\tilde\mP$, which represents a large class of channels and indeed any CPTP map on up to two-qubit systems. 
In Appendix~\ref{app:basis cptp}, we show the latter claim by introducing a universal set of basis operations for CPTP maps that solely consist of programmable operations in $\tilde\mP$, which can be of independent interest.
Although we are usually interested in the cases of small error $\epsilon$, here we do not impose this assumption.
This takes into account the fact that the decomposition is not unique, and the following theorem provides bounds for the optimal cost for any decomposition of this form. 
To state the result, let $I_{ij}=\lset\vec k\in\{0,1\}^i\sbar {\rm wt}(\vec k)=j\rset$ denote the set of $i$-bit strings which have $j$ 1's.
Then, for $\vec k\in I_{ij}$ we define $(A^j,B^{i-j})_{\vec k}$ to be an operation that applies $A$ for $j$ times and $B$ for $i-j$ times with a pattern specified by a binary string $\vec k$.
For instance, if $\vec k=(0, 1, 1)$, then $(A^2,B^1)_{\vec k}=B\circ A\circ A$.
Then, we get the following bounds that provide a systematic estimation of the optimal cost (the proof is given in Appendix~\ref{app:gamma opt general}):

\begin{thm} \label{thm:gamma opt general}
For $\mE = (1-\epsilon)\id + \epsilon_+\Lambda - \epsilon_-\Xi$ where $\mE\in\mT(d)$, $\epsilon, \epsilon_\pm\geq 0$, and $\Lambda,\Xi\in\tilde\mP(d)$,
if $1-\epsilon>\epsilon_++\epsilon_-$, then for any $\mU\in\mT_u(d)$,
\bal
 2\sum_{i=0}^\infty \sum_{j=0}^i \frac{t_{ij}(-\epsilon_+)^j\epsilon_-^{i-j}}{(1-\epsilon)^{i+1}}-1
 \leq \gamma_{\rm opt}(\mU) \leq \frac{1}{1-2\epsilon_+}
 \label{eq:gamma opt bound general}
\eal
where $t_{ij}:=\sum_{\vec k\in I_{ij}}\Tr[\Phi_d\id\otimes ({\Lambda}^{j}, {\Xi}^{i-j})_{\vec k}(\Phi_d)]$.
\end{thm}

The upper bound, an achievable cost (see also Appendix~\ref{app:efficient sampling}), becomes especially insightful when the given error channel is not expressed by Pauli or Clifford operations.
For instance, consider the generalized dephasing channel $\mF_{\hat n, \epsilon}(\rho) := (1-\epsilon) \rho + \epsilon e^{i\hat n \cdot \hat\sigma \pi/2 } \rho e^{-i\hat n \cdot \hat\sigma\pi/2 }$, where $\hat n\cdot \hat\sigma = n_x X + n_y Y + n_z Z$ and $\hat n$ is the unit vector that determines the rotation axis. 
As we show in Appendix~\ref{app:prob flip}, the best systematic decomposition with the Clifford-based basis in Ref.~\cite{Endo2018practical} gives $\gamma = \{1+(\sqrt{2}-1)\epsilon\}/(1-2\epsilon)$ for $\hat n = (\cos(\pi/8), 0, \sin(\pi/8))$, leading to a larger cost than what is achievable in our framework.
This advantage comes from our flexible choice of decomposition basis taking advantage of our large set of implementable operations.
In fact, we find that the advantage is generic.
We compare two costs $\gamma_{\rm disc}$ and $\gamma_{\rm cont}$ where $\gamma_{\rm disc}$ is the optimal cost for the discrete basis in Ref.~\cite{Endo2018practical} and $\gamma_{\rm cont}$ is the achievable cost $(1-2\epsilon_+)^{-1}$ from Theorem~\ref{thm:gamma opt general} realized by our continuous basis, where we set $\mU=\id$ for both cases. 
Figure~\ref{fig:random noise} plots $\gamma_{\rm disc}/\gamma_{\rm cont}$ for randomly sampled single-qubit and two-qubit noise models of the form $\mE=(1-\epsilon)\id + \epsilon \mV$, where $\mV$ is a Haar random unitary. 
We find that $\gamma_{\rm cont}$ is smaller than $\gamma_{\rm disc}$ in many cases for single-qubit error channels and all the sampled cases for two-qubit error channels.
The advantage for two-qubit error is particularly significant---the improvement can become around the factor of 2 for the realistic noise strength, which can drastically reduce the total sampling cost that grows exponentially with the number of gates.  
Since two-qubit noise will be the most demanding one in real experiments, our result may greatly ease the experimental challenges. 
At the same time, the advantage seen in Fig.~\ref{fig:random noise} confirms the necessity of considering the extended class of implementable operations introduced here to properly assess the ultimate capability of error mitigation. 

\begin{figure}
    \centering
    \subfloat[][Single-qubit noise]{
    \includegraphics[width=.24\textwidth]{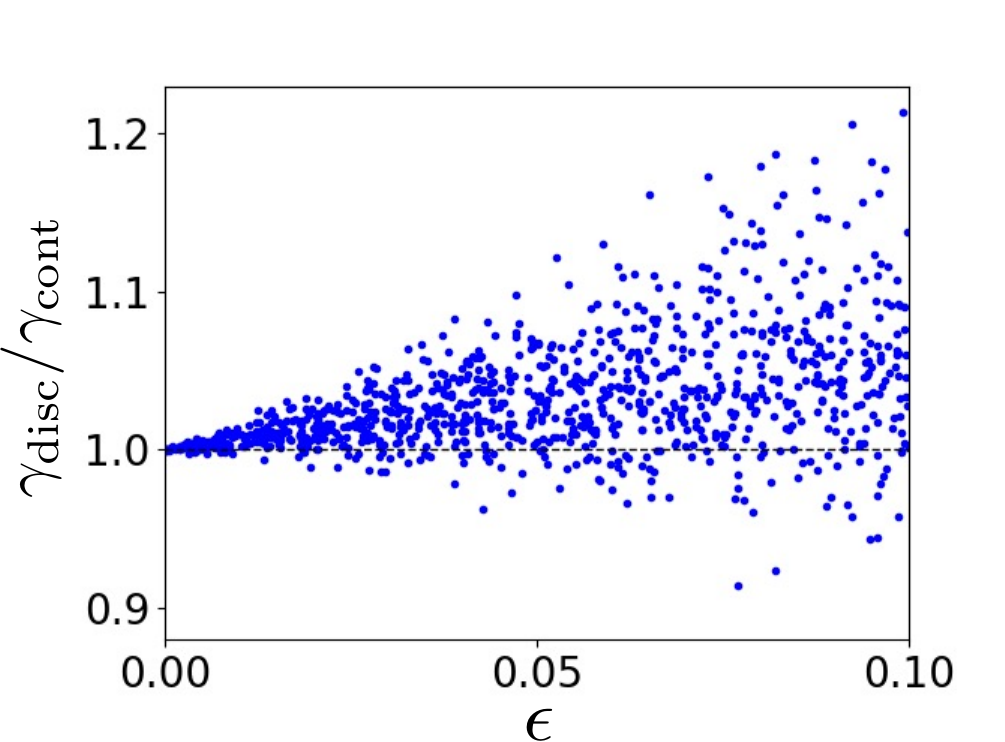}
    }
    \subfloat[][Two-qubit noise]{
    \includegraphics[width=.24\textwidth]{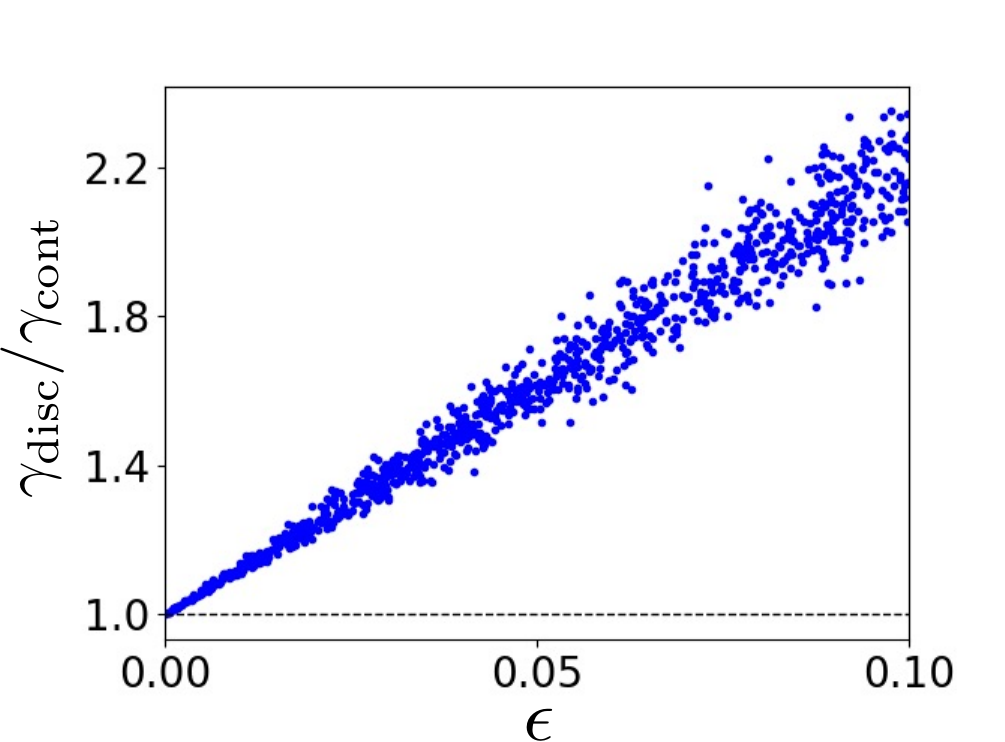}
    }
    \caption{The ratio of the resource cost obtained by the systematic decomposition with discrete basis operations in Ref.~\cite{Endo2018practical} to the achievable cost obtained in Theorem~\ref{thm:gamma opt general} with respect to the noise strength, where the error channel is chosen as  $(1-\epsilon)\id + \epsilon \mV$ with $\mV$ being a Haar random unitary acting on (a) single-qubit systems and (b) two-qubit systems. The black dotted line corresponds to $\gamma_{\rm disc}/\gamma_{\rm cont}=1$. The number of samples used for each figure is $10^3$.}
    \label{fig:random noise}
\end{figure}

The lower bound, on the other hand, corresponds to the fundamental limitations on the error mitigation performance.
It serves as a universal necessary cost even if we employ the continuous set of implementable operations. 
Notably, this bound readily applies to any subset of implementable operations contained in $\mI_\mE$, encompassing a broad class of noisy devices of interest.
Although it might look daunting to evaluate the expression for the lower bound, it can be analytically calculated for many cases, as we discuss in Appendix~\ref{app:gamma opt general}.

Theorem~\ref{thm:gamma opt general} can also give insights into correlated noise models. 
If the noise of interest has correlation among a number of qubits, the effective noise channel gets large and a quasiprobability decomposition becomes numerically intractable even if the description of the noise is given and a discrete set of implementable operations is used. 
On the other hand, Theorem~\ref{thm:gamma opt general} immediately gives an effective evaluation of the optimal cost as long as the noise channel is provided in a certain form.


Interestingly, the bounds in \eqref{eq:gamma opt bound general} can be improved by focusing on specific noise models. 
We can even obtain the \emph{exact} optimal costs for depolarizing and dephasing noise, providing a precise characterization of the devices' capability under these noise models (the proof is given in Appendix~\ref{app:gamma opt depolarizing}).
Together with \eqref{eq:gamma and robustness}, it also gives an operational meaning to the robustness measure in terms of quantum error mitigation.

\begin{thm}\label{thm:robustness depolarizing}
 Let $\mD_{d,\epsilon}$ be a depolarizing channel acting on $d$-dimensional systems defined by
 $\mD_{d,\epsilon}(\rho) := (1-\epsilon) \rho + \epsilon \mbI/d$.
Then, for any unitary gate $\mU\in\mT_u(d)$ and $0\leq \epsilon <1$, 
\bal
 \gamma_{\rm opt}(\mU) = \frac{1+(1-2/d^2)\epsilon}{1-\epsilon},
 \label{eq:dep robustness optimal}
\eal
and the minimum in \eqref{eq:gamma opt definition} is achieved at $\eta_0=1+(d^2-1)\epsilon/\{d^2(1-\epsilon)\}$, $\mO_0=\mD_{d,\epsilon}\circ\mU$ and $\eta_i = -\epsilon/\{d^2(1-\epsilon)\}$, $\mO_i=\mD_{d,\epsilon}\circ\mP_i\circ\mU,$ for $i=1,\dots,d^2-1$ where $\mP_i$ is the $i$\,th Pauli channel.

Also, let $\mF_{\epsilon}:= (1-\epsilon) \id + \epsilon \mZ$ be the qubit dephasing channel.
Then, for any unitary gate $\mU\in\mT_u(2)$ and $0\leq \epsilon <1/2$,
\bal
 \gamma_{\rm opt}(\mU) = \frac{1}{1-2\epsilon}
 \label{eq:flip robustness optimal}
\eal
and the minimum in \eqref{eq:gamma opt definition} is achieved at 
$\eta_0=(1-\epsilon)/(1-2\epsilon)$, $\mO_0=\mF_\epsilon\circ\mU$ and $\eta_1=-\epsilon/(1-2\epsilon)$, $\mO_1=\mF_\epsilon\circ\mZ\circ\mU$.
\end{thm}

This result shows that for these noise models, the optimal cost can be achieved by the Pauli-based basis operations~\cite{Temme2017error,Endo2018practical}, and the continuous degrees of freedom do not help reduce the cost.
This particularly implies that the heuristic linear decomposition for the depolarizing noise considered in Ref.~\cite{Temme2017error} remains optimal even in our extended framework, putting fundamental restrictions on the error mitigation feasibility.

Our method also provides a tighter bound for the amplitude damping noise (the proof is given in Appendix~\ref{app:gamma opt amplitude damping}).
\begin{thm} \label{thm:gamma opt amplitude damping}
 Let $\mA_\epsilon$ be the qubit amplitude damping channel with Kraus operators $A_0 = \dm{0}+\sqrt{1-\epsilon}\dm{1}$, $A_1=\sqrt{\epsilon}\ketbra{0}{1}$.
Then, for any unitary channel $\mU\in\mT_u(2)$ and $0\leq \epsilon<1$, 
\bal
 \frac{\sqrt{1-\epsilon}+\epsilon/2}{1-\epsilon}\leq \gamma_{\rm opt}(\mU) \leq  \frac{1+\epsilon}{1-\epsilon}.
\label{eq:robustness ad bound}
\eal
\end{thm}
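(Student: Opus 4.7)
The plan is to apply the two-sided bound in \eqref{eq:gamma optimal bound} with an explicit quasiprobability decomposition of $\mU$, and then feed the same decomposition into the systematic dual witness described just after \eqref{eq:gamma optimal bound}. In contrast to the Pauli-noise cases of Theorems~\ref{thm:robustness depolarizing} and~\ref{thm:gamma opt dephasing}, the two bounds produced this way will not coincide for amplitude damping, which is why the theorem is stated as an inequality sandwich rather than an equality.

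For the upper bound, I would first decompose the formal inverse $\mA_\epsilon^{-1}$ as an affine combination of channels in $\tilde{\mP}(2)$. Because $\mA_\epsilon$ is non-unital---it mixes a $Z$-component into $I$---a pure Pauli-and-identity ansatz cannot work, so the minimal fix is to add a single state preparation,
\begin{equation}
 \mA_\epsilon^{-1} = a\,\id + b\,\mZ + c\,\mP_{\ket{0}},
\end{equation}
and solve for $(a,b,c)$ by matching the four non-trivial entries of the Pauli transfer matrix of $\mA_\epsilon^{-1}$: the $(X,X)=(Y,Y)$ and $(Z,Z)$ diagonals fix $a\pm b$, while the $(I,I)$ normalization and the non-unital $(Z,I)$ entry pin down $c$. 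Composing on the right with $\mU$ and using $\mP_{\ket{0}}\circ\mU=\mP_{\ket{0}}$ yields $\mU = a(\mA_\epsilon\circ\mU)+b(\mA_\epsilon\circ\mZ\circ\mU)+c\,\mP_{\ket{0}}$ with all three terms in $\mI_{\mA_\epsilon}(2)$. A quick sign check shows $a,b\geq 0$ and $c\leq 0$ throughout $\epsilon\in[0,1)$, so $|a|+|b|+|c|=a+b-c$ collapses to $(1+\epsilon)/(1-\epsilon)$ as required by \eqref{eq:gamma opt definition}.

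For the lower bound, I would plug the same $(a,b,c)$ into the recipe from Appendix~\ref{app:systematic lower bound} and take $Y = d^{-2}J_{(\mA_\epsilon^{-1})^\dagger\circ\mU}$, which that appendix guarantees is feasible for \eqref{eq:robustness dual}. The corresponding bound $2\Tr[\Phi_2\,\id\otimes\mA_\epsilon^{-1}(\Phi_2)]-1$ is manifestly $\mU$-independent, as needed for a statement that holds for every $\mU$. Evaluating it reduces to computing $\mA_\epsilon^{-1}(\ket{i}\bra{j})$ on the four matrix units: diagonal ones pick up contributions from all three pieces of the ansatz, off-diagonal ones only from $a-b$, and assembling the relevant matrix elements of the Choi state simplifies to $(\sqrt{1-\epsilon}+\epsilon/2)/(1-\epsilon)$.

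The main obstacle is structural rather than computational: the systematic witness yields a tight lower bound only if the primal decomposition itself is tight, so a wasteful ansatz would produce a wasteful sandwich. The three-term ansatz above appears essentially forced---$\id$ and $\mZ$ are needed to realize the two distinct diagonal decay rates, and $\mP_{\ket{0}}$ is the cheapest way to cancel the non-unital drift---but whether the remaining gap in \eqref{eq:robustness ad bound} is genuine, i.e.\ whether some more elaborate non-Pauli decomposition could close it, is not settled by this line of argument.
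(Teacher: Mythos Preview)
Your proposal is correct and follows essentially the same route as the paper's own proof: the paper also uses the three-term decomposition $\mA_\epsilon^{-1}=a\,\id+b\,\mZ+c\,\mP_{\ket{0}}$ (with exactly the coefficients your PTM-matching would produce, namely $a=\frac{1+\sqrt{1-\epsilon}}{2(1-\epsilon)}$, $b=\frac{1-\sqrt{1-\epsilon}}{2(1-\epsilon)}$, $c=-\frac{\epsilon}{1-\epsilon}$) for the upper bound, and then feeds that same $\mA_\epsilon^{-1}$ into the systematic witness $Y=d^{-2}J_{(\mA_\epsilon^{-1})^\dagger\circ\mU}$ of Appendix~\ref{app:systematic lower bound} for the lower bound. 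The one point you leave slightly implicit---why the bare $\mP_{\ket{0}}$ term lies in $\mI_{\mA_\epsilon}(2)$---is handled by the paper via $\mA_\epsilon\circ\mP_{\ket{0}}=\mP_{\ket{0}}$, which you may want to state explicitly.
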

The upper bound in \eqref{eq:robustness ad bound} can be achieved by a decomposition $\eta_0=(1+\sqrt{1-\epsilon})/\{2(1-\epsilon)\}$, $\mO_0=\mA_\epsilon\circ\mU$, $\eta_1=(1-\sqrt{1-\epsilon})/\{2(1-\epsilon)\}$, $\mO_1=\mA_\epsilon\circ\mZ\circ\mU$, and $\eta_2=-\epsilon/(1-\epsilon)$, $\mO_2=\mA_\epsilon\circ\mP_{\ket{0}}$, and the lower bound is obtained by finding a good witness operator $Y$ in \eqref{eq:robustness dual}, or alternatively directly evaluating the lower bound in \eqref{eq:gamma opt bound general}.


\section{Conclusions}

We investigated the ultimate potential and limitations of near-term noisy devices in terms of quantum error mitigation.
We pointed out the wide programmability equipped with NISQ devices and formalized their implementable operations to properly assess their full capability. 
We provided a general methodology to evaluate the optimal error mitigation cost with the probabilistic error cancellation method by establishing a connection to the robustness measure that naturally emerges as a resource quantifier in our framework.
We applied our method to a general noise model and obtained universal bounds for the optimal mitigation cost, finding that our framework, which takes into account the flexible choice of implementable operations, leads to a generic advantage over the strategy based on discrete basis sets. 
We also obtained the exact optimal costs for depolarizing and dephasing noise, rigorously showing that the cost for depolarizing channels given in Ref.~\cite{Temme2017error} is optimal even in our extended framework, as well as obtained an improved bound for the amplitude damping channel.

Our consideration can be combined with other frameworks such as learning algorithms for unknown error models~\cite{Czarnik2020clifford, Strikis2020learning,Zlokapa2020deep} and analog error mitigation~\cite{Sun2020continuous}, and may be extended to a broad class of problems for which the quasiprobability sampling is used, including the measurement of observables in variational algorithms and classical simulation of noisy quantum circuits.
The generality of our approach also gives us the freedom to consider various sets of implementable operations, allowing one to tailor the analysis in this work to given physical devices. 
Another important future work is to extend our results toward a unified information-theoretic account of error mitigation and error correction, which will further clarify the true capability of noisy quantum devices.

\begin{acknowledgments}

The author is grateful to Tennin Yan, Nobuyuki Yoshioka, and Xiao Yuan for fruitful discussions, and in particular to Suguru Endo, Kosuke Mitarai, and Yuya O. Nakagawa for valuable discussions and useful comments on the manuscript. This work was supported by National Research Foundation (NRF) Singapore, under its NRFF Fellow programme (Award No. NRF-NRFF2016-02), the Singapore Ministry of Education Tier 1 Grant No. 2019-T1-002-015, NSF, ARO, IARPA, AFOSR, the Takenaka Scholarship Foundation, and the Cross-ministerial Strategic Innovation Promotion Program (SIP), “Photonics and Quantum Technology for Society5.0”. Any opinions, findings and conclusions or recommendations expressed in this material are those of the author(s) and do not reflect the views of National Research Foundation, Singapore.

\end{acknowledgments}


\appendix

\section{Optimal cost and robustness (Eq.~\eqref{eq:gamma and robustness})} \label{app:gamma and robustness}

Here, we show Eq.~\eqref{eq:gamma and robustness} for completeness.
Let $\Psi, \Xi\in\mI_\mE$ be the operations that achieve $\left(\mU+R_{\mI_\mE}(\mU)\Xi\right)/(1+R_{\mI_\mE}(\mU))=\Psi$.
Then, one can rewrite $\mU = (1+R_{\mI_\mE}(\mU))\Psi - R_{\mI_\mE}(\mU)\Xi$, which is a valid decomposition of $\mU$. 
Since $\gamma_{\rm opt}$ is obtained by minimizing over all the possible decompositions, we get $\gamma_{\rm opt}(\mU)\leq 1 + R_{\mI_\mE}(\mU)+R_{\mI_\mE}(\mU)=2R_{\mI_\mE}(\mU)+1$.

On the other hand, suppose that a decomposition $\mU=\sum_\alpha \eta_\alpha \mO_\alpha,\,\mO_\alpha\in\mI_\mE$ achieves $\gamma_{\rm opt}=\sum_\alpha |\eta_\alpha|$.
Then, by separately writing the terms with non-negative coefficients and negative coefficients, one can write
\bal
\mU=\sum_{\eta_\alpha\geq 0}\eta_\alpha \mO_\alpha - \sum_{\eta_\alpha<0}|\eta_\alpha| \mO_\alpha = (1+s)\Psi - s\Xi
\eal
where $s=\sum_{\eta_\alpha< 0}|\eta_\alpha|$, and $\Psi = (1+s)^{-1}\sum_{\eta_\alpha\geq 0}\eta_\alpha \mO_\alpha$, $\Xi=s^{-1}\sum_{\eta_\alpha< 0}|\eta_\alpha| \mO_\alpha$ are CPTP maps. 
Note that we also used that $\mU$ and $\mO_\alpha$ are trace preserving and thus $\sum_\alpha \eta_\alpha = 1$;
also, the fact that $\Psi,\Xi\in\mI_\mE$ is ensured by the convexity of $\mI_\mE$.
Thus, we get $(\mU+s\Xi)/(1+s)\in\mI_\mE,\,\Xi\in\mI_\mE$, which leads to $R_{\mI_\mE}(\mU)\leq s = \sum_{\eta_\alpha< 0}|\eta_\alpha| = (\gamma_{\rm opt}-1)/2$.
This shows the other direction of the inequality, $\gamma_{\rm opt}(\mU)\geq 2R_{\mI_\mE}(\mU)+1$.

\section{Dual form of the robustness (Eq.~\eqref{eq:robustness dual})} \label{app:robustness dual}

Here, we obtain the dual form of the robustness in \eqref{eq:robustness dual} for general resource theories of channels (i.e., \textit{any} finite-dimensional convex and closed set of free channels), in which our case is included as a special case. 
The discussion in this section is based on the argument given for another type of robustness measure (known as generalized robustness) for general resource theories of channels~\cite{Takagi2019general}.

Let $\mT(A,B)$ be the set of quantum channels with input subsystem $A$ and output subsystem $B$. 
Given a convex and closed set of channels $\mO_\mF\subseteq \mT(A,B)$, we consider the (standard) robustness measure for channel $\Lambda\in\mT(A,B)$ with respect to $\mO_\mF$ as 
\bal
R_{\mO_{\mF}}(\Lambda) \coloneqq \min\lset r\geq 0 \sbar \frac{\Lambda +  r \, \Theta}{1+r}\in \mO_{\mF}, \;\, \Theta \in \mO_{\mF}\rset,
\label{eq:def robustness channels}
\eal
where $J_\Lambda= \id\otimes \Lambda (d\cdot\Phi_d)$ denotes the Choi matrix for a channel $\Lambda$ and we assume $R_{\mO_\mF}(\Lambda)<\infty$. 
Let $\mbH_{A(B)}$ be the subspace consisting of Hermitian operators defined on subsystems $A (B)$, and $\mO_\mF^J\subseteq \mbH_A\otimes \mbH_B$ be the set of Choi matrices corresponding to channels in $\mO_\mF$.
Then, introducing the variable $\tilde\Xi = \Lambda + r \Theta$, \eqref{eq:def robustness channels} can be rewritten as the following optimization problem:
\ba
&{\text{\rm minimize}}& \ \ r \label{eq:opt_obj channel}\\
&{\text{\rm subject to}}&\ \ J_{\tilde\Xi} - J_{\Lambda} \in \cone\left(\mO_\mF^J\right)    \label{eq:opt_cond1 channel}\\
&& \ \ J_{\tilde\Xi} \in {\rm cone}\left(\mO_\mF^J\right)\\
&& \ \  \Tr_{B}[J_{\tilde\Xi}]=(1+r)\,\mbI_A
\ea
where $\cone(\mS):=\lset \lambda S \sbar \lambda\geq 0, S\in\mS\rset.$
We write the Lagrangian as
\begin{equation}\begin{aligned}
    L(J_{\tilde\Xi}, r ; X, Y, Z) &= r - \Tr[ Z J_{\tilde\Xi} ] - \Tr[ Y (J_{\tilde\Xi} - J_\Lambda) ]\\
    &\quad- \Tr[ X\left( (1+r) \mbI_A - \Tr_{B} J_{\tilde\Xi} \right)]\\
    &= r ( 1 - \Tr[ X] ) - \Tr[ (Z + Y - X \otimes \mbI_{B}) J_{\tilde\Xi} ]\\
    &\quad+ \Tr[ Y J_\Lambda ] - \Tr[ X ]
\end{aligned}\end{equation}
where $Z, Y\in \mO_\mF^J \*$ and $X\in\mbH_A$ with $\mS \*=\lset W \sbar \Tr[WC]\geq 0,\forall C\in\cone(\mS)\rset$ being a dual cone of $\cone(\mS)$.
This gives the dual problem 
\ba
&{\text{\rm maximize}}& \ \  \Tr[Y J_{\Lambda}] -1  \label{eq:dual1 obj channel}\\
&{\text{\rm subject to}}&\ \ Y=-Z+X\otimes\mbI\in \mO_\mF^J \* \label{eq:dual1 cond1 channel}\\
&& \ \ X\in \mbH_A,\,\Tr[X]=1 \label{eq:dual1 cond2 channel}\\
&& \ \ Z\in \mO_\mF^J \*. \label{eq:dual1 cond3 channel}
\ea

It can be confirmed that the Slater's condition~\cite{boyd_2004} holds by taking $X=\mbI_A/d_A$ and $Z=\mbI_{AB}/(2d_A)$ where $d_A$ is the dimension of $A$.
Thus, the strong duality holds and the dual optimum coincides with the primal optimum. 
Using \eqref{eq:dual1 cond1 channel}, \eqref{eq:dual1 cond2 channel}, and \eqref{eq:dual1 cond3 channel}, we get for any $\Xi\in\mO_\mF$ 
\ba
 \Tr[(-Y+X\otimes\mbI)J_\Xi] &=& -\Tr[YJ_\Xi] + \Tr[\Xi(X)]\\
 &=& -\Tr[YJ_\Xi] + 1 \geq 0.
\ea

Since the objective function does not include $X$ or $Z$, we reach the following equivalent formulation;

\begin{align}
&{\text{\rm maximize}} \ \  \Tr[Y J_{\Lambda}] -1  \label{eq:dual2 obj channel}\\
&{\text{\rm subject to}}\ \ \ Y\in \mO_\mF^J \* \iff \Tr[YJ_\Xi]\geq 0,\ \forall \Xi\in \mO_\mF \\
&\hspace{1.5cm}\Tr[YJ_\Xi]\leq 1,\ \forall \Xi\in \mO_\mF, 
\end{align}
which results in \eqref{eq:robustness dual} by taking $\mO_\mF=\mI_\mE$.

\section{Systematic lower bound for Eq.~\eqref{eq:gamma optimal bound}}\label{app:systematic lower bound}

Here, we present a systematic lower bound that can be constructed by a specific decomposition used for the upper bound. 
Suppose $\mU=\sum_\alpha \eta_\alpha \mE\circ\mV_\alpha$ for $\mV_\alpha\in\tilde\mP$ and define $\mE^{-1}:=\sum_\alpha \eta_\alpha \mV_\alpha\circ\mU^\dagger$ so that $\mE\circ\mE^{-1}=\id$.
Then, as long as $\mV_\alpha,\mE\in\mT(d)$, it also holds that $\mE^{-1}\circ\mE=\id$. 
This can be shown by considering a matrix representation of an operation $\Lambda\in\mT(d)$ with a $d^2\times d^2$ matrix $M^\Lambda_{ab,ij}$ so that a matrix element of $\sigma := \Lambda(\rho)$ is written as $\sigma_{ab} = \sum_{ij}M^\Lambda_{ab,ij}\rho_{ij}$.
Then, $\mE\circ\mE^{-1}=\id$ implies $M^\mE M^{\mE^{-1}}=M^{\id}=\mbI$. 
Since the right inverse of a square matrix is also the left inverse, we get $M^{\mE^{-1}}M^\mE=\mbI$, resulting in $\mE^{-1}\circ\mE=\id$.
Using this, we get
\bal
\frac{1}{d^2}\Tr[J_{{\mE^{-1}}^\dagger\circ\mU}J_{\mE\circ\mV}] &= \Tr[\id\otimes\mU(\Phi_d)\,\id\otimes\mE^{-1}\circ\mE\circ\mV(\Phi_d)]\\
&= \Tr[\id\otimes\mU(\Phi_d)\,\id\otimes\mV(\Phi_d)],
\eal
which ensures $0\leq d^{-2}\Tr[J_{{\mE^{-1}}^\dagger\circ\mU}J_{\mE\circ\mV}]\leq 1$ for any $\mV\in\tilde\mP(d)$. 
Since $J_{{\mE^{-1}}^\dagger\circ\mU}$ is also Hermitian, $Y:=d^{-2}J_{{\mE^{-1}}^\dagger\circ\mU}$ is a valid operator satisfying the condition in \eqref{eq:robustness dual}. 
We can further obtain
\bal
 \Tr[YJ_\mU] &= \Tr[\id\otimes\mU(\Phi_d)\,\id\otimes\mE^{-1}\circ\mU(\Phi_d)]\\
 &=\Tr[\mU^T\otimes\id(\Phi_d)\,\mU^T\otimes\mE^{-1}(\Phi_d)]\\
 &=\Tr[\Phi_d\,\id\otimes\mE^{-1}(\Phi_d)].
\eal
where $\mU^T(\cdot)=U^T\cdot (U^T)^\dagger$ is the unitary transposed with respect to the Schmidt basis of $\Phi_d$, and we also used that $\id\otimes A \ket{\Phi_d} =  A^T\otimes\id  \ket{\Phi_d}$ for any matrix $A$.  
Thus, \eqref{eq:gamma optimal bound} implies that $2\Tr[\Phi_d\,\id\otimes\mE^{-1}(\Phi_d)]-1$ serves as a valid lower bound for $\gamma_{\rm opt}(\mU)$.

\section{Basis operations for CPTP maps} \label{app:basis cptp}

Although any noise channel can be represented by a CPTP map that possibly involves some external systems, one could consider CP trace nonincreasing maps as effective error channels acting on systems of interest.
A trace nonincreasing map on $d$-dimensional quantum systems is an element of the $d^4$-dimensional vector space of linear maps, and thus it can be represented as a linear combination of $d^4$ linearly independent trace nonincreasing maps. 
A specific set of such maps $\{\mB_i\}_{i=1}^{16}$ (Table~\ref{tab:universal endo}) that serves as a universal basis that can decompose any trace nonincreasing map was introduced in Ref.~\cite{Endo2018practical}.
It is a set of 16 linearly independent maps, consisting of 10 Clifford unitaries and 6 trace nonincreasing projections.
Then, any operation acting on qubit systems can be linearly decomposed with respect to this basis, and its tensor product also serves as a universal basis for multiqubit systems. 
They also showed that this basis remains linearly independent after suffering from a noise channel as long as the noise strength is sufficiently small and used this for the quasiprobability decomposition for probabilistic error cancellation. 

Although a universal basis should contain trace nonincreasing maps if the noise model can also be trace nonincreasing in general, in many cases noise models of interest are described as trace-preserving maps.
Then, intuitively, one should be able to find a universal basis only consisting of trace-preserving maps for decomposing an arbitrary CPTP map, which is more favorable for probabilistic error cancellation because using trace nonincreasing maps usually incurs a larger overhead constant due to the larger coefficients in the decomposition needed to complement the non-unit trace (see also the example at the end of this section).
In addition, using many projective measurements is not preferred due to the relatively large measurement error rate. 

Here, we argue that if we restrict our attention to trace-preserving maps, the number of elements necessary for a universal basis is reduced to $d^4-d^2+1$, which themselves are trace preserving.
We then provide a specific set of CPTP maps consisting of unitary and state preparations that serve as a universal basis for CPTP maps on single-qubit and two-qubit systems.
To this end, let $\mO_{\rm CPTP}^J(d)$ be the set of Choi matrices for CPTP maps with input and output being $d$-dimensional quantum systems, and 
\bal
\mV(d):={\rm span}(\mO_{\rm CPTP}^J(d))=\lset\sum_i c_i J_{\mE_i}\sbar c_i\in\mbR,\  \mE_i\in \mT(d)\rset
\eal
be the span of the set of Choi matrices. 
We can show that $\mV(d)$ is equivalent to the subspace
\bal
\tilde\mV(d)\coloneqq\lset X\in \mbH(A,B) \sbar \Tr_B X \propto \mbI_A\rset
\eal
where $\mbH(A,B)$ is the set of Hermitian operators acting on the composite system $AB$, each of which has local dimension $d$.
This can be seen as follows. Since $\Tr_B J_{\mE}=\mbI_A,\forall \mE\in\mT(d)$, it is clear that $\mV(d)\subseteq\tilde\mV(d)$. 
On the other hand, take any $X\in\tilde\mV(d)$.
Then, $X$ can always be decomposed as $X=X_+-X_-$, where $X_{\pm}\geq 0$, and $\Tr_B X_+-\Tr_B X_- = c\mbI_A$ for some constant $c\in\mbR$.
Define 
\bal
\tilde X_+(\lambda)&\coloneqq X_+ +(\lambda\mbI_A-\Tr_B X_+)\otimes\mbI_B/d\\
\tilde X_-(\lambda)&\coloneqq X_- +[(\lambda-c)\mbI_A-\Tr_B X_-]\otimes\mbI_B/d\\
&=X_- +(\lambda\mbI_A-\Tr_B X_+)\otimes\mbI_B/d.
\eal
One can check that $X=X_+ - X_-=\tilde X_+(\lambda) - \tilde X_-(\lambda)$ and $\Tr_B\tilde X_+(\lambda)=\lambda\mbI_A$, $\Tr_B\tilde X_+(\lambda)=(\lambda-c)\mbI_A$ for any $\lambda$.  
Moreover, by taking large enough $\lambda$, we can always ensure that $\tilde X_{\pm}(\lambda)\geq 0$.
This means that there always exist two CPTP maps $\mE_1$, $\mE_2$ and real numbers $c_1$, $c_2$ such that $X=c_1J_{\mE_1}+c_2J_{\mE_2}$. This shows $\tilde\mV(d)\subseteq\mV(d)$, resulting in $\mV(d)=\tilde\mV(d)$.  
$\tilde\mV(d)$ is a subspace of $\mbH(A,B)$ with $\dim \mbH(A,B)=d^4$ and because of the constraints $\tilde\mV(d)$ has, the dimension of $\tilde\mV(d)$ is reduced to $d^4-d^2+1$.
Since $\mV(d)=\tilde\mV(d)$, we have $\dim\mV(d)= \dim\tilde\mV(d) = d^4-d^2+1$.
Thus, $d^4-d^2+1$ linearly independent CPTP maps are necessary and sufficient to linearly decompose an arbitrary CPTP map.

Up to two-qubit systems, such sets of CPTP maps can be explicitly constructed.
For single-qubit systems, we replace $\mB_{11}, \mB_{12}, \mB_{13}$ with trace preserving state preparation channels and remove $\mB_{14}, \mB_{15}, \mB_{16}$, leading to a set of linearly independent CPTP maps $\{\tilde\mB_i\}_{i=1}^{13}$ (Table~\ref{tab:universal cptp}). 
For multiqubit systems, simply tensoring $\{\tilde\mB_i\}_{i=1}^{13}$ is not sufficient to construct a universal basis in general because $13^k<16^k-4^k+1$ for $k\geq 2$.
For two-qubit systems, one needs to find $16^2-4^2+1-13^2 = 72$ additional independent operations to construct a complete basis, and they can be explicitly found as in Table~\ref{tab:universal cptp two qubits}. 
An explicit construction beyond two-qubit systems is left for future work.

Note that since all the operations in Table~\ref{tab:universal cptp},~\ref{tab:universal cptp two qubits} are unitary or state preparations, they are elements of programmable operations $\tilde\mP$ in our framework. 
Thus, for any CPTP map $\mE$ on single-qubit and two-qubit systems, there always exists some decomposition $\mE = (1-\epsilon)\id + \epsilon_+ \Lambda - \epsilon_- \Xi,\ \Lambda,\Xi\in\tilde\mP$ up to unitary.

Using $\{\tilde \mB_i\}$ instead of $\{\mB_i\}$ can save cost for error mitigation. For instance, one can check that the optimal cost $\gamma = \sum_{\alpha} |\eta_{\alpha}|$ to mitigate the amplitude damping channel with $\{\mB_i\}$ results in $(1+2\epsilon)/(1-\epsilon)$ whereas the optimal cost with $\{\tilde\mB_i\}$ is $(1+\epsilon)/(1-\epsilon)$ as in \eqref{eq:identity decomposition ad}.

\begin{table}[htbp]
\begin{minipage}[t]{.45\textwidth}
    \centering
    \begin{tabular}{c|c}
    $\mB_1$ & $\id$ \\
    $\mB_2 $ & $\mX$ \\
    $\mB_3 $ & $\mY$ \\
    $\mB_4 $ & $ \mZ$ \\
    $\mB_5 $ & $ \mK^\dagger\circ\mS^\dagger\circ\mK$ \\
    $\mB_6 $ & $ \mK\circ \mS^\dagger \circ\mK^\dagger$ \\
    $\mB_7 $ & $ \mS^\dagger$ \\
    $\mB_8 $ & $ \mK\mH\mK^\dagger$ \\
    $\mB_9 $ & $ \mH$ \\
    $\mB_{10} $ & $ \mK^\dagger\mH\mK$ \\
    $\mB_{11} $ & $ \mK^\dagger\pi_z \mK$ \\
    $\mB_{12} $ & $ \mK\pi_z \mK^\dagger$ \\ 
    $\mB_{13} $ & $ \pi_z$ \\
    $\mB_{14} $ & $ \mK^\dagger\circ\pi_z \circ \mX\circ\mK$\\
    $\mB_{15} $ & $ \mK\circ\pi_z \circ \mX\circ\mK^\dagger$\\
    $\mB_{16} $ & $ \pi_z \circ \mX$ 
    \end{tabular}
    \caption{Universal basis presented in Ref.~\cite{Endo2018practical} based on Clifford unitaries and projections. Curly letters represent that they are considered as channels, e.g. $\mH(\cdot)=H\cdot H$. Note that $H$, $S$ are the Hadamard gate and the phase gate, and $K=SH$ is the Clifford gate that cycles Pauli operators as $K^\dagger X K = Y$, $K^\dagger Y K = Z$, $K^\dagger Z K = X$. $\pi_z(\cdot)=\left(\frac{\mbI+Z}{2}\right)\cdot\left(\frac{\mbI+Z}{2}\right)$ is a trace nonincreasing projection onto the $\ket{0}$ state.}
    \label{tab:universal endo}
\end{minipage}
\hfill
\begin{minipage}[t]{.45\textwidth}
    \centering
    \begin{tabular}{c|c}
    $\tilde\mB_1$ & $\id=\mB_1$ \\
    $\tilde\mB_2 $ & $\mX=\mB_2$ \\
    $\tilde\mB_3 $ & $\mY=\mB_3$ \\
    $\tilde\mB_4 $ & $ \mZ=\mB_4$ \\
    $\tilde\mB_5 $ & $ \mK^\dagger\circ\mS^\dagger\circ\mK=\mB_5$ \\
    $\tilde\mB_6 $ & $ \mK\circ \mS^\dagger \circ\mK^\dagger=\mB_6$ \\
    $\tilde\mB_7 $ & $ \mS^\dagger=\mB_7$ \\
    $\tilde\mB_8 $ & $ \mK\mH\mK^\dagger=\mB_8$ \\
    $\tilde\mB_9 $ & $ \mH=\mB_9$ \\
    $\tilde\mB_{10} $ & $ \mK^\dagger\mH\mK=\mB_{10}$ \\
    $\tilde\mB_{11} $ & $ \mP_{\ket{+}}$ \\
    $\tilde\mB_{12} $ & $ \mP_{\ket{+y}}$ \\
    $\tilde\mB_{13} $ & $ \mP_{\ket{0}}$ \\
    \end{tabular}
    \caption{Universal basis for CPTP maps acting on single-qubit systems. $\mP_{\ket{\psi}}$ is a channel that prepares a state $\ket{\psi}$. $\ket{+}=\frac{1}{\sqrt{2}}(\ket{0}+\ket{1})$ and $\ket{+y}=\frac{1}{\sqrt{2}}(\ket{0}+i\ket{1})$ are +1 eigenstates of $X$ and $Y$.}
    \label{tab:universal cptp}
\end{minipage}
\end{table}

\begin{table}[htbp]
\centering
\begin{tabular}{c|c}
$\tilde\mB_{1}$--$\tilde\mB_{169}$ & $\{\mB_i\}_{i=1}^{13}\otimes \{\mB_i\}_{i=1}^{13}$\\
$\tilde\mB_{170}$--$\tilde\mB_{178}$ & $\mathcal{CX}$ + conjugation with $\mK_{1,2},\mK_{1,2}^\dagger$ \\
$\tilde\mB_{179}$--$\tilde\mB_{187}$ & $\mX_1\circ\mathcal{CX}\circ\mX_1$ + conjugation with $\mK_{1,2},\mK_{1,2}^\dagger$ \\
$\tilde\mB_{188}$--$\tilde\mB_{196}$ & $\mathcal{CS}$ + conjugation with $\mK_{1,2},\mK_{1,2}^\dagger$ \\
$\tilde\mB_{197}$--$\tilde\mB_{205}$ & $\mathcal{CH}$ + conjugation with $\mK_{1,2},\mK_{1,2}^\dagger$ \\
$\tilde\mB_{206}$--$\tilde\mB_{214}$ & $\mathcal{C_H X}$ + conjugation with $\mK_{1,2},\mK_{1,2}^\dagger$ \\
$\tilde\mB_{215}$--$\tilde\mB_{223}$ & $\mathcal{C X}\circ\mH_1$ + conjugation with $\mK_{1,2},\mK_{1,2}^\dagger$ \\
$\tilde\mB_{224}$--$\tilde\mB_{226}$ & $\mathcal{S}_W$ + conjugation with $\mK_2,\mK_2^\dagger$ \\
$\tilde\mB_{227}$--$\tilde\mB_{232}$ & $i\mathcal{S}_W$ + conjugation with $\mK_{1,2}, \mK_{2}^\dagger$ \\
$\tilde\mB_{233}$--$\tilde\mB_{241}$ & $\mathcal{S}_W\circ\mH_1$ + conjugation with $\mK_{1,2},\mK_{1,2}^\dagger$ 
\end{tabular}
\caption{Universal basis for CPTP maps acting on two-qubit systems. $\mathcal{CX}$, $\mathcal{CS}$, $\mathcal{CH}$, $\mathcal{C_H X}$ are channel versions of CNOT, controlled-phase, controlled-Hadamard, and NOT controlled with $\pm 1$ eigenstates of the Hadamard gate. $\mS_W$ and $i\mS_W$ are channel versions of SWAP and iSWAP ($=\dm{00}+i\ketbra{10}{01}+i\ketbra{01}{10}+\dm{11}$) gates. The subscripts refer to the subsystems that the operations act on. ``$\mU$ + conjugation with $\mV$'' refers to sandwiching $\mU$ with $\mV$ and its conjugation $\mV^\dagger$ as $\mV^\dagger \circ \mU \circ \mV$. Then, ``$\mU$ + conjugation with $\mK_{1,2}, \mK_{1,2}^\dagger$'' collects all of the nine possible conjugations with 
$\id_{12}$, $\mK_1$, $\mK_2$, $\mK_1^\dagger$, $\mK_2^\dagger$, $\mK_1\otimes\mK_2$, $\mK_1\otimes\mK_2^\dagger$, $\mK_1^\dagger\otimes\mK_2$,  $\mK_1^\dagger\otimes \mK_2^\dagger$ for $\mU$.}
\label{tab:universal cptp two qubits}
\end{table}

\section{Proof of Theorem~\ref{thm:gamma opt general}} \label{app:gamma opt general}

\begin{proof}
We first obtain an upper bound. Let us define 
\bal
 \Psi_e = \frac{1}{N_e}\sum_{i=0}^\infty\sum_{j:\text{even}}\sum_{\vec k\in I_{ij}}\eta_{ij\vec k} (\Lambda^j,\Xi^{i-j})_{\vec k}
\eal
\bal
 \Psi_o = \frac{1}{N_o}\sum_{i=1}^\infty\sum_{j:\text{odd}}\sum_{\vec k\in I_{ij}}|\eta_{ij\vec k}| (\Lambda^j,\Xi^{i-j})_{\vec k}
\eal
where
\bal
 \eta_{ij \vec k} = \frac{(-\epsilon_+)^j\epsilon_-^{i-j} }{(1-\epsilon)^{i+1}},\ \forall \vec k\in I_{ij}, \ j=0,\dots,i.
\eal
and $N_{e,o}$ are normalization constants,
\bal
 N_e &= \sum_{i=0}^\infty\sum_{j:\text{even}}\sum_{\vec k\in I_{ij}}\eta_{ij\vec k}\\
 &= \sum_{i=0}^\infty\sum_{j:\text{even}}\binom{i}{j}\frac{\epsilon_+^j\epsilon_-^{i-j}}{(1-\epsilon)^{i+1}}\\
 & = \sum_{i=0}^\infty \frac{1}{2} \frac{(\epsilon_++\epsilon_-)^i + (-\epsilon_++\epsilon_-)^i}{(1-\epsilon)^{i+1}}\\
 &=  \frac{1}{2(1-\epsilon-(\epsilon_++\epsilon_-))} + \frac{1}{2(1-\epsilon-(-\epsilon_++\epsilon_-))}  
\eal
\bal
 N_o &= \sum_{i=1}^\infty\sum_{j:\text{odd}}\sum_{\vec k\in I_{ij}}|\eta_{ij\vec k}| \\
 &= \sum_{i=1}^\infty\sum_{j:\text{odd}}\binom{i}{j}\frac{\epsilon_+^j\epsilon_-^{i-j}}{(1-\epsilon)^{i+1}} \\
 &= \sum_{i=1}^\infty \frac{1}{2} \frac{(\epsilon_++\epsilon_-)^i - (-\epsilon_++\epsilon_-)^i}{(1-\epsilon)^{i+1}}\\
 &=  \frac{1}{2(1-\epsilon-(\epsilon_++\epsilon_-))} - \frac{1}{2(1-\epsilon-(-\epsilon_++\epsilon_-))}
\eal
where we changed the order of the infinite sum because $\sum_{i=0}^\infty \left|\frac{(\epsilon_++\epsilon_-)^i}{(1-\epsilon)^{i+1}}\right|<\infty$ and $\sum_{i=0}^\infty \left|\frac{(-\epsilon_++\epsilon_-)^i}{(1-\epsilon)^{i+1}}\right|<\infty$.
Noting that $\tilde\mP$ is closed under concatenation, i.e., $\mV_1,\mV_2\in\tilde\mP\Rightarrow \mV_1\circ\mV_2\in\tilde\mP$, as well as the convexity of $\tilde\mP$, we have $\Psi_e,\Psi_o\in\tilde\mP$.
Then,
\begin{align}\begin{aligned}
 &\mE\circ \left[N_e \Psi_e - N_o \Psi_o\right] \\
 &\quad\quad=  \left[(1-\epsilon)\id + \epsilon_+\Lambda-\epsilon_-\Xi\right]\circ\\
 &\quad\quad\quad\lim_{n\to\infty}\sum_{i=0}^n\sum_{j=0}^i\sum_{\vec k\in I_{ij}}\left(\frac{(-\epsilon_+)^j\epsilon_-^{i-j}}{(1-\epsilon)^{i+1}}  (\Lambda^j,\Xi^{i-j})_{\vec k}\right)\\
 &\quad\quad= \id + \lim_{n\to\infty}\sum_{j=0}^{n+1}\sum_{\vec k \in I_{n+1j}}\frac{(-\epsilon_+)^j\epsilon_-^{n+1-j}}{(1-\epsilon)^{n+1}}  (\Lambda^j,\Xi^{n+1-j})_{\vec k}\\
 &\quad\quad=\id
\end{aligned}\end{align}
where the last equality is because 
\begin{align}\begin{aligned}
&\left\| \sum_{j=0}^{n+1}\sum_{\vec k\in I_{n+1 j}}\frac{(-\epsilon_+)^j\epsilon_-^{n+1-j}}{(1-\epsilon)^{n+1}}  (\Lambda^j,\Xi^{i-j})_{\vec k}\right\|_\diamond \\
&\quad\leq d^2\frac{\sum_{j=0}^{n+1}\binom{n+1}{j} \epsilon_+^j\epsilon_-^{n+1-j}}{\left(1-\epsilon\right)^{n+1}} \\
 &\quad= d^2\left(\frac{\epsilon_++\epsilon_-}{1-\epsilon}\right)^{n+1}\to 0\ (n\to \infty),
\end{aligned}\end{align}
where $d$ is the dimension of the system. 
Thus, we have $\mU=N_e \mE\circ\Psi_e\circ\mU - N_o\mE\circ\Psi_o\circ\mU$, and since $\Psi_e\circ\mU, \Psi_o\circ\mU\in\tilde\mP$, we get $\gamma_{\rm opt}(\mU)\leq N_e + N_o = \frac{1}{1-\epsilon-(\epsilon_++\epsilon_-)} = \frac{1}{1-2\epsilon_+}$.

Next, we obtain the lower bound using the dual form of the robustness \eqref{eq:robustness dual}. 
Take the following witness 
\bal
 Y &= \frac{1}{d^2}J_{{\mE^{-1}}^\dagger\circ\mU} \\
 &=  \frac{1}{d}\sum_{i=0}^\infty\sum_{j=0}^{i}\sum_{\vec k\in I_{ij}}\frac{(-\epsilon_+)^{j}\epsilon_-^{i-j}}{(1-\epsilon)^{i+1}}\id\otimes\left[({\Lambda^\dagger}^{j},{\Xi^\dagger}^{i-j})_{\vec k}\circ\mU\right](\Phi_d).
\eal

One can check that this is a well-defined bounded operator by
\bal
 \|Y\|_\infty &\leq  \frac{1}{d}\sum_{i=0}^\infty\sum_{j=0}^{i}\sum_{\vec k\in I_{ij}}\frac{\epsilon_+^{j}\epsilon_-^{i-j}}{(1-\epsilon)^{i+1}}\left\|\id\otimes ({\Lambda^\dagger}^{j}, {\Xi^\dagger}^{i-j})_{\vec k}\circ\mU(\Phi_d)\right\|_\infty \\
 &\leq  \frac{1}{d}\sum_{i=0}^\infty\sum_{j=0}^{i}\sum_{\vec k\in I_{ij}}\frac{\epsilon_+^{j}\epsilon_-^{i-j}}{(1-\epsilon)^{i+1}}\max_{i,j,\vec k}|||({\Lambda^\dagger}^{j}, {\Xi^\dagger}^{i-j})_{\vec k}|||_\infty\\
 & = \frac{1}{d}\frac{1}{1-\epsilon-(\epsilon_++\epsilon_-)}\max_{i,j,\vec k}\|(\Lambda^{j},\Xi^{i-j})_{\vec k}\|_\diamond\\
 & = \frac{1}{d}\frac{1}{1-\epsilon-(\epsilon_++\epsilon_-)}
\eal
where, in the first inequality, we used the triangle inequality of the operator norm; in the second inequality, we used the property of the completely bounded infinite norm $\|(\id\otimes\cdot )(X)\|_\infty\leq ||| \cdot |||_\infty:=\sup_X\|(\id\otimes\cdot )(X)\|_\infty$; and in the first equality, we used the dual property between the completely bounded infinite norm and diamond norm~\cite{Watrous2009SDP}. 
Thus, $Y$ is a valid bounded Hermitian operator. 

As argued in Appendix~\ref{app:systematic lower bound}, this choice of $Y$ satisfies the condition in \eqref{eq:robustness dual}, and defining 
\bal
t_{ij}:=\sum_{\vec k\in I_{ij}}\Tr[\Phi_d\id\otimes ({\Lambda}^{j}, {\Xi}^{i-j})_{\vec k}(\Phi_d)],
\eal
we get
\bal
 \Tr[YJ_\mU] &= \Tr[\Phi_d \id\otimes \mE^{-1}(\Phi_d)] = \sum_{i=0}^\infty\sum_{j=0}^i\frac{t_{ij}(-\epsilon_+)^j\epsilon_-^{i-j}}{(1-\epsilon)^{i+1}}.
\eal
Using \eqref{eq:gamma optimal bound} results in the lower bound of the statement.  
\end{proof}

Although it might look daunting to evaluate $t_{ij}$, it can actually be exactly obtained for many cases. 
An important observation is that if at least one state preparation is involved, $t_{ij}$ collapses to a constant. 
For instance, for the amplitude damping channel $\mA_{\delta}$ where $\epsilon = \frac{1+\delta-\sqrt{1-\delta}}{2}$, $\epsilon_+ = \delta$, $\epsilon_-=\frac{\sqrt{1-\delta}-(1-\delta)}{2}$ and $\Lambda=\mP_{\ket{0}}$, $\Xi=\mZ$, we get 
\bal 
t_{ij} =
\begin{cases}
 \frac{1}{4}\binom{i}{j} & (j\neq 0) \\
 1 & (j=0,\ i:{\rm even}) \\
 0 & (j=0,\ i:{\rm odd}),
\end{cases}
\eal
which gives
\begin{align}\begin{aligned}
&\sum_{i=0}^\infty\sum_{j=0}^i\frac{t_{ij}(-\epsilon_+)^j\epsilon_-^{i-j}}{(1-\epsilon)^{i+1}}\\
&\quad= \sum_{i=0}^\infty\sum_{j=1}^i\frac{1}{4}\binom{i}{j}\frac{(-\epsilon_+)^j\epsilon_-^{i-j}}{(1-\epsilon)^{i+1}} + \sum_{i:{\rm even}}\frac{\epsilon_-^{i}}{(1-\epsilon)^{i+1}} \\
&\quad = \sum_{i=0}^\infty\frac{1}{4}\left(\frac{(-\epsilon_++\epsilon_-)^{i}}{(1-\epsilon)^{i+1}}-\frac{\epsilon_-^{i}}{(1-\epsilon)^{i+1}}\right) + \sum_{i:{\rm even}}\frac{\epsilon_-^{i}}{(1-\epsilon)^{i+1}} \\
&\quad = \frac{1}{4}\left(1-\frac{1}{1-\epsilon-\epsilon_-}\right) + \frac{1-\epsilon}{(1-\epsilon-\epsilon_-)(1-\epsilon+\epsilon_-)}\\
&\quad = -\frac{1}{4}\frac{\epsilon+\epsilon_-}{1-\epsilon-\epsilon_-} + \frac{1-\epsilon}{(1-\epsilon-\epsilon_-)(1-\epsilon+\epsilon_-)}.
\end{aligned}\end{align}

Substituting $\epsilon = \frac{1+\delta-\sqrt{1-\delta}}{2}$, $\epsilon_+ = \delta$, and $\epsilon_-=\frac{\sqrt{1-\delta}-(1-\delta)}{2}$, we get $\gamma_{\rm opt}(\mU)\geq \frac{\sqrt{1-\delta}+\delta/2}{1-\delta}$, which reproduces the lower bound in \eqref{eq:robustness ad bound}.

\section{Efficient sampling for general noise channels} \label{app:efficient sampling}

As can be seen in the proof of Theorem~\ref{thm:gamma opt general}, the upper bound can be achieved by a decomposition composed of operations applying $\Lambda$ and $\Xi$ in different patterns.
Although it involves an infinite series of operations, an efficient sampling from the probability distribution originating from the infinite series is possible, which allows one to run the probabilistic error cancellation at this cost in practice. 

To run the probabilistic error cancellation at the cost $1/(1-2\epsilon_+)$ for an error channel $\mE=(1-\epsilon)\id + \epsilon_+\Lambda - \epsilon_-\Xi$, one needs to sample each operation to be applied at the right probability. 
Specifically, the proof of Theorem~\ref{thm:gamma opt general} implies that the operation $(\Lambda^j,\Xi^{i-j})_{\vec k}$ should be applied at probability 
\bal
 P_{ij\vec k}:=\frac{|\eta_{ij\vec k}|}{\sum_{i=0}^\infty \sum_{j=0}^i\sum_{\vec k\in I_{ij}}|\eta_{ij\vec k}|} = \frac{1-\epsilon-(\epsilon_++\epsilon_-)}{(1-\epsilon)^{i+1}}\epsilon_+^j\epsilon_-^{i-j}.
\eal
This sampling can be easily done as follows. First, note that the probability of realizing an element with index $i$ is given by
\bal
 P_i:= \sum_{j=0}^i\sum_{\vec k\in I_{ij}}P_{ij \vec k} = \left(1-\frac{\epsilon_++\epsilon_-}{1-\epsilon}\right)\left(\frac{\epsilon_++\epsilon_-}{1-\epsilon}\right)^i,
\eal
which is the probability of observing heads for $i$ times in sequence followed by a tail at the $i+1$\,th trial when flipping a biased coin with $p_{\rm head}=(\epsilon_++\epsilon_-)/(1-\epsilon)$. 
Also, note that  
\bal
 \frac{P_{ij}}{P_{i}} := \frac{\sum_{\vec k\in I_{ij}}P_{ij \vec k}}{P_{i}} =  \binom{i}{j}\left(\frac{\epsilon_+}{\epsilon_++\epsilon_-}\right)^j\left(\frac{\epsilon_-}{\epsilon_++\epsilon_-}\right)^{i-j},
\eal
which is a binomial distribution and
\bal
\frac{P_{ij \vec k}}{P_{ij}} = \frac{P_{ij \vec k'}}{P_{ij}},\ \forall \vec k, \vec k'\in I_{ij}.
\eal
Thus, the sampling procedure is summarized as follows:
\begin{enumerate}
    \item Flip a biased coin with $p_{\rm head} = (\epsilon_++\epsilon_-)/(1-\epsilon)$ until a tail is observed. Suppose the tail was observed at the $i+1$\,th trial.
    \item Flip another biased coin with $p_{\rm head} = \epsilon_+/(\epsilon_++\epsilon_-)$ for $i$ times.
    Suppose heads were observed for $j$ times.
    \item Randomly choose $\vec k$ out of $\binom{i}{j}$ possible vectors in $I_{ij}$ and apply $(\Lambda^j,\Xi^{i-j})_{\vec k}$.
\end{enumerate}

\section{Cost for the generalized dephasing channel with the universal basis in Ref.~\cite{Endo2018practical}} \label{app:prob flip}

Consider the generalized dephasing channel $\mF_{\hat n, \epsilon}(\rho) := (1-\epsilon) \rho + \epsilon e^{i\hat n \cdot \hat\sigma \pi/2 } \rho e^{-i\hat n \cdot \hat\sigma\pi/2 }$ where $\hat n\cdot \hat\sigma = n_x X + n_y Y + n_z Z$ and $\hat n$ is the unit vector that determines the rotation axis. 
As an example, take $\hat n = (\cos(\pi/8), 0, \sin(\pi/8))$, which is the $\pi$ rotation with respect to the axis halfway between the $X$ rotation and the Hadamard rotation. 

Let us consider an optimal decomposition 
$\id = \sum_i \eta_i \mF_{\hat n , \epsilon}\circ\mB_i$, which is equivalent to decomposing the inverse map $\mF_{\hat n, \epsilon}^{-1}=\sum_i \eta_i \mB_i$,
using the universal basis in Ref.~\cite{Endo2018practical} (see Table~\ref{tab:universal endo} in Appendix~\ref{app:basis cptp}).
One can check that 
\bal
 \id = \mF_{\hat n,\epsilon}\circ\left[\frac{1-\epsilon}{1-2\epsilon}\id + \frac{(\sqrt{2}-1)\epsilon}{2(1-2\epsilon)}\mZ-\frac{\epsilon}{2(1-2\epsilon)}\mX\right.\\
 \left. -\frac{\sqrt{2}\epsilon}{2(1-2\epsilon)}\mH\right],
\eal
with 
$\gamma = \{1+(\sqrt{2}-1)\epsilon\}/(1-2\epsilon)$ is optimal with this basis (as well as the basis in Table~\ref{tab:universal cptp} in Appendix~\ref{app:basis cptp}).
This is greater than the upper bound in \eqref{eq:gamma opt bound general}, and this gap comes from the fact that the upper bound in \eqref{eq:gamma opt bound general} is obtained by adaptively choosing an appropriate basis using the known information about the noise channel $\mF_{\hat n, \epsilon}$.

\section{Proof of Theorem~\ref{thm:robustness depolarizing}} \label{app:gamma opt depolarizing}

\begin{proof}
 We first show the upper bound using \eqref{eq:robustness primal}. 
 To this end, we consider a specific decomposition of the identity map 
 \bal
 \id = \left(1+\frac{(d^2-1)\epsilon}{d^2(1-\epsilon)}\right) \mD_{d,\epsilon}\circ\id - \frac{\epsilon}{d^2 (1-\epsilon)} \sum_{i=1}^{d^2-1} \mD_{d,\epsilon} \circ \mP_i.
 \eal
By applying $\mU$ to both sides from the right and noting that $\mP_i\circ\mU\in\mT_u$, we get $\gamma_{\rm opt}(\mU)\leq \{1+(1-2/d^2)\epsilon\}/(1-\epsilon)$ by \eqref{eq:gamma opt definition}.
On the other hand, consider the following $Y\in\mbH$ providing the systematic lower bound
\bal
Y = d^{-2}J_{{\mD_{d,\epsilon}^{-1}}^\dagger\circ\mU} =\frac{1}{d}\frac{1}{1-\epsilon}\id\otimes\mU(\Phi_d) - \frac{1}{d}\frac{\epsilon}{d^2(1-\epsilon)}\mbI.
\eal
A straightforward computation together with \eqref{eq:gamma optimal bound} leads to
\bal
\gamma_{\rm opt}(\mU)&\geq 2\Tr[\Phi_d\id\otimes\mD_{d,\epsilon}^{-1}(\Phi_d)]-1\\&=\{1+(1-2/d^2)\epsilon\}/(1-\epsilon),
\eal
concluding the proof.

The case for the dephasing noise can be shown similarly. 
We first get an upper bound by providing a specific decomposition. 
Namely, consider the following decomposition
\bal
 \id &= \frac{1-\epsilon}{1-2\epsilon} \mF_\epsilon\circ\id - \frac{\epsilon}{1-2\epsilon}  \mF_\epsilon \circ \mZ, 
 \label{eq:identity decomposition flip}
\eal
which can be explicitly checked as 
\begin{align}\begin{aligned}
 &\left((1-\epsilon)\id + \epsilon \mZ\right)\circ \left(\frac{1-\epsilon}{1-2\epsilon} \id - \frac{\epsilon}{1-2\epsilon} \mZ\right)\\
 &\quad= \left(\frac{(1-\epsilon)^2}{1-2\epsilon}-\frac{\epsilon^2}{1-2\epsilon}\right)\id + \left(-\frac{\epsilon(1-\epsilon)}{1-2\epsilon}+\frac{\epsilon(1-\epsilon)}{1-2\epsilon}\right)\mZ\\
 &\quad = \id.
\end{aligned}\end{align}
By applying $\mU$ to both sides of \eqref{eq:identity decomposition flip} from the right and using \eqref{eq:gamma opt definition}, we get $\gamma_{\rm opt}(\mU)\leq\frac{1}{1-2\epsilon}$.

Next, we obtain a lower bound using \eqref{eq:gamma optimal bound} and the dual form of the robustness \eqref{eq:robustness dual}.
Consider the following $Y\in\mbH$.
\bal
 Y &= \frac{1}{4}J_{{\mF_\epsilon^{-1}}^\dagger\circ\mU}\\
 &=\frac{1}{2}\frac{1-\epsilon}{1-2\epsilon}\id\otimes\mU (\Phi_2) - \frac{1}{2}\frac{\epsilon}{1-2\epsilon}(\id\otimes(\mZ\circ\mU))(\Phi_2).
 \label{eq:witness flip}
\eal 
As argued in Appendix~\ref{app:systematic lower bound}, this choice of $Y$ satisfies the condition in \eqref{eq:robustness dual}.
A lower bound of the robustness can be computed using this $Y$ as
\bal
 \Tr[YJ_\mU]-1 &= \Tr[\Phi_2\,\id\otimes\mF_{\epsilon}^{-1}(\Phi_2)]-1\\
 &=\frac{1-\epsilon}{1-2\epsilon} - \frac{\epsilon}{1-2\epsilon}\Tr[\Phi_2\id\otimes\mZ (\Phi_2)] - 1 \\
 &= \frac{\epsilon}{1-2\epsilon}
\eal

Thus, the dual form of the robustness \eqref{eq:robustness dual} implies $R_{\mI_{\mF_\epsilon}}(\mU)\geq \frac{\epsilon}{1-2\epsilon}$, and using \eqref{eq:gamma optimal bound}, we get $\gamma_{\rm opt}(\mU)\geq \frac{1}{1-2\epsilon}$. Combining it with the matching upper bound shown above concludes the proof of the statement. 
\end{proof}

\section{Proof of Theorem~\ref{thm:gamma opt amplitude damping}} \label{app:gamma opt amplitude damping}

\begin{proof}
Consider the following decomposition
\bal
 \id &= \frac{1+\sqrt{1-\epsilon}}{2(1-\epsilon)}\mA_\epsilon\circ\id+\frac{1-\sqrt{1-\epsilon}}{2(1-\epsilon)} \mA_\epsilon\circ\mZ - \frac{\epsilon}{1-\epsilon}  \mA_\epsilon \circ \mP_{\ket{0}},
 \label{eq:identity decomposition ad}
\eal
which proves $\gamma_{\rm opt}(\mU)\leq\frac{1+\epsilon}{1-\epsilon}$.

To obtain a lower bound using the dual form \eqref{eq:robustness dual}, consider the following operator
\bal
 Y &= \frac{1}{4}\,J_{{\mA_\epsilon^{-1}}^\dagger\circ\mU}\\
 &=\frac{1}{2}\frac{1+\sqrt{1-\epsilon}}{2(1-\epsilon)}\,\id\otimes\mU (\Phi_2)\\
 &\quad+ \frac{1}{2}\frac{1-\sqrt{1-\epsilon}}{2(1-\epsilon)}(\id\otimes(\mZ\circ\mU))(\Phi_2)- \frac{1}{4} \frac{\epsilon}{1-\epsilon}\, \mU^T(\dm{0})\otimes\mbI
 \label{eq:witness ad}
\eal

As argued in Appendix~\ref{app:systematic lower bound}, this choice of $Y$ brings us to
\bal
 0\leq \Tr[YJ_{\mA_\epsilon\circ\mV}] = \Tr[\id\otimes\mU (\Phi_2)\id\otimes\mV (\Phi_2)] \leq 1,\ \forall \mV\in\tilde\mP
\eal
ensuring the condition in \eqref{eq:robustness dual}.
Finally, using the form of \eqref{eq:witness ad}, we get a lower bound of the robustness
\bal
\Tr[YJ_{\mU}]-1 &= \Tr[\Phi_2\,\id\otimes\mA_{\epsilon}^{-1}(\Phi_2)]-1\\
&= \frac{\sqrt{1-\epsilon}-1+3\epsilon/2}{2(1-\epsilon)}.
\eal
Using \eqref{eq:gamma optimal bound}, we obtain $\gamma_{\rm opt}(\mU)\geq \frac{\sqrt{1-\epsilon}+\epsilon/2}{1-\epsilon}$, which concludes the proof. 
\end{proof}

\bibliographystyle{apsrmp4-2}
\bibliography{myref}

\begin{thebibliography}{46}%
\makeatletter
\providecommand \@ifxundefined [1]{%
 \@ifx{#1\undefined}
}%
\providecommand \@ifnum [1]{%
 \ifnum #1\expandafter \@firstoftwo
 \else \expandafter \@secondoftwo
 \fi
}%
\providecommand \@ifx [1]{%
 \ifx #1\expandafter \@firstoftwo
 \else \expandafter \@secondoftwo
 \fi
}%
\providecommand \natexlab [1]{#1}%
\providecommand \enquote  [1]{``#1''}%
\providecommand \bibnamefont  [1]{#1}%
\providecommand \bibfnamefont [1]{#1}%
\providecommand \citenamefont [1]{#1}%
\providecommand \href@noop [0]{\@secondoftwo}%
\providecommand \href [0]{\begingroup \@sanitize@url \@href}%
\providecommand \@href[1]{\@@startlink{#1}\@@href}%
\providecommand \@@href[1]{\endgroup#1\@@endlink}%
\providecommand \@sanitize@url [0]{\catcode `\\12\catcode `\$12\catcode
  `\&12\catcode `\#12\catcode `\^12\catcode `\_12\catcode `\%12\relax}%
\providecommand \@@startlink[1]{}%
\providecommand \@@endlink[0]{}%
\providecommand \url  [0]{\begingroup\@sanitize@url \@url }%
\providecommand \@url [1]{\endgroup\@href {#1}{\urlprefix }}%
\providecommand \urlprefix  [0]{URL }%
\providecommand \Eprint [0]{\href }%
\providecommand \doibase [0]{https://doi.org/}%
\providecommand \selectlanguage [0]{\@gobble}%
\providecommand \bibinfo  [0]{\@secondoftwo}%
\providecommand \bibfield  [0]{\@secondoftwo}%
\providecommand \translation [1]{[#1]}%
\providecommand \BibitemOpen [0]{}%
\providecommand \bibitemStop [0]{}%
\providecommand \bibitemNoStop [0]{.\EOS\space}%
\providecommand \EOS [0]{\spacefactor3000\relax}%
\providecommand \BibitemShut  [1]{\csname bibitem#1\endcsname}%
\let\auto@bib@innerbib\@empty
\bibitem [{\citenamefont {Preskill}(2018)}]{Preskill2018quantumcomputingin}%
  \BibitemOpen
  \bibfield  {author} {\bibinfo {author} {\bibfnamefont {J.}~\bibnamefont
  {Preskill}},\ }\emph {\bibinfo {title} {Quantum {C}omputing in the {NISQ} era
  and beyond}},\ \href {https://doi.org/10.22331/q-2018-08-06-79} {\bibfield
  {journal} {\bibinfo  {journal} {{Quantum}}\ }\textbf {\bibinfo {volume}
  {2}},\ \bibinfo {pages} {79} (\bibinfo {year} {2018})}\BibitemShut {NoStop}%
\bibitem [{\citenamefont {McArdle}\ \emph {et~al.}(2020)\citenamefont
  {McArdle}, \citenamefont {Endo}, \citenamefont {Aspuru-Guzik}, \citenamefont
  {Benjamin},\ and\ \citenamefont {Yuan}}]{McArdle2020quantum}%
  \BibitemOpen
  \bibfield  {author} {\bibinfo {author} {\bibfnamefont {S.}~\bibnamefont
  {McArdle}}, \bibinfo {author} {\bibfnamefont {S.}~\bibnamefont {Endo}},
  \bibinfo {author} {\bibfnamefont {A.}~\bibnamefont {Aspuru-Guzik}}, \bibinfo
  {author} {\bibfnamefont {S.~C.}\ \bibnamefont {Benjamin}},\ and\ \bibinfo
  {author} {\bibfnamefont {X.}~\bibnamefont {Yuan}},\ }\emph {\bibinfo {title}
  {Quantum computational chemistry}},\ \href
  {https://doi.org/10.1103/RevModPhys.92.015003} {\bibfield  {journal}
  {\bibinfo  {journal} {Rev. Mod. Phys.}\ }\textbf {\bibinfo {volume} {92}},\
  \bibinfo {pages} {015003} (\bibinfo {year} {2020})}\BibitemShut {NoStop}%
\bibitem [{\citenamefont {Arute~\textit{et al.}}(2019)}]{arute_quantum_2019}%
  \BibitemOpen
  \bibfield  {author} {\bibinfo {author} {\bibfnamefont {F.}~\bibnamefont
  {Arute~\textit{et al.}}},\ }\emph {\bibinfo {title} {Quantum supremacy using
  a programmable superconducting processor}},\ \href
  {https://doi.org/10.1038/s41586-019-1666-5} {\bibfield  {journal} {\bibinfo
  {journal} {Nature}\ }\textbf {\bibinfo {volume} {574}},\ \bibinfo {pages}
  {505} (\bibinfo {year} {2019})}\BibitemShut {NoStop}%
\bibitem [{\citenamefont {Temme}\ \emph {et~al.}(2017)\citenamefont {Temme},
  \citenamefont {Bravyi},\ and\ \citenamefont {Gambetta}}]{Temme2017error}%
  \BibitemOpen
  \bibfield  {author} {\bibinfo {author} {\bibfnamefont {K.}~\bibnamefont
  {Temme}}, \bibinfo {author} {\bibfnamefont {S.}~\bibnamefont {Bravyi}},\ and\
  \bibinfo {author} {\bibfnamefont {J.~M.}\ \bibnamefont {Gambetta}},\ }\emph
  {\bibinfo {title} {Error Mitigation for Short-Depth Quantum Circuits}},\
  \href {https://doi.org/10.1103/PhysRevLett.119.180509} {\bibfield  {journal}
  {\bibinfo  {journal} {Phys. Rev. Lett.}\ }\textbf {\bibinfo {volume} {119}},\
  \bibinfo {pages} {180509} (\bibinfo {year} {2017})}\BibitemShut {NoStop}%
\bibitem [{\citenamefont {Li}\ and\ \citenamefont
  {Benjamin}(2017)}]{Li2017efficient}%
  \BibitemOpen
  \bibfield  {author} {\bibinfo {author} {\bibfnamefont {Y.}~\bibnamefont
  {Li}}\ and\ \bibinfo {author} {\bibfnamefont {S.~C.}\ \bibnamefont
  {Benjamin}},\ }\emph {\bibinfo {title} {Efficient Variational Quantum
  Simulator Incorporating Active Error Minimization}},\ \href
  {https://doi.org/10.1103/PhysRevX.7.021050} {\bibfield  {journal} {\bibinfo
  {journal} {Phys. Rev. X}\ }\textbf {\bibinfo {volume} {7}},\ \bibinfo {pages}
  {021050} (\bibinfo {year} {2017})}\BibitemShut {NoStop}%
\bibitem [{\citenamefont {McClean}\ \emph {et~al.}(2017)\citenamefont
  {McClean}, \citenamefont {Kimchi-Schwartz}, \citenamefont {Carter},\ and\
  \citenamefont {de~Jong}}]{McClean2017subspace}%
  \BibitemOpen
  \bibfield  {author} {\bibinfo {author} {\bibfnamefont {J.~R.}\ \bibnamefont
  {McClean}}, \bibinfo {author} {\bibfnamefont {M.~E.}\ \bibnamefont
  {Kimchi-Schwartz}}, \bibinfo {author} {\bibfnamefont {J.}~\bibnamefont
  {Carter}},\ and\ \bibinfo {author} {\bibfnamefont {W.~A.}\ \bibnamefont
  {de~Jong}},\ }\emph {\bibinfo {title} {Hybrid quantum-classical hierarchy for
  mitigation of decoherence and determination of excited states}},\ \href
  {https://doi.org/10.1103/PhysRevA.95.042308} {\bibfield  {journal} {\bibinfo
  {journal} {Phys. Rev. A}\ }\textbf {\bibinfo {volume} {95}},\ \bibinfo
  {pages} {042308} (\bibinfo {year} {2017})}\BibitemShut {NoStop}%
\bibitem [{\citenamefont {Bonet-Monroig}\ \emph {et~al.}(2018)\citenamefont
  {Bonet-Monroig}, \citenamefont {Sagastizabal}, \citenamefont {Singh},\ and\
  \citenamefont {O'Brien}}]{Bonet2018lowcost}%
  \BibitemOpen
  \bibfield  {author} {\bibinfo {author} {\bibfnamefont {X.}~\bibnamefont
  {Bonet-Monroig}}, \bibinfo {author} {\bibfnamefont {R.}~\bibnamefont
  {Sagastizabal}}, \bibinfo {author} {\bibfnamefont {M.}~\bibnamefont
  {Singh}},\ and\ \bibinfo {author} {\bibfnamefont {T.~E.}\ \bibnamefont
  {O'Brien}},\ }\emph {\bibinfo {title} {Low-cost error mitigation by symmetry
  verification}},\ \href {https://doi.org/10.1103/PhysRevA.98.062339}
  {\bibfield  {journal} {\bibinfo  {journal} {Phys. Rev. A}\ }\textbf {\bibinfo
  {volume} {98}},\ \bibinfo {pages} {062339} (\bibinfo {year}
  {2018})}\BibitemShut {NoStop}%
\bibitem [{\citenamefont {Endo}\ \emph {et~al.}(2018)\citenamefont {Endo},
  \citenamefont {Benjamin},\ and\ \citenamefont {Li}}]{Endo2018practical}%
  \BibitemOpen
  \bibfield  {author} {\bibinfo {author} {\bibfnamefont {S.}~\bibnamefont
  {Endo}}, \bibinfo {author} {\bibfnamefont {S.~C.}\ \bibnamefont {Benjamin}},\
  and\ \bibinfo {author} {\bibfnamefont {Y.}~\bibnamefont {Li}},\ }\emph
  {\bibinfo {title} {Practical Quantum Error Mitigation for Near-Future
  Applications}},\ \href {https://doi.org/10.1103/PhysRevX.8.031027} {\bibfield
   {journal} {\bibinfo  {journal} {Phys. Rev. X}\ }\textbf {\bibinfo {volume}
  {8}},\ \bibinfo {pages} {031027} (\bibinfo {year} {2018})}\BibitemShut
  {NoStop}%
\bibitem [{\citenamefont {McArdle}\ \emph {et~al.}(2019)\citenamefont
  {McArdle}, \citenamefont {Yuan},\ and\ \citenamefont
  {Benjamin}}]{McArdle2019digital}%
  \BibitemOpen
  \bibfield  {author} {\bibinfo {author} {\bibfnamefont {S.}~\bibnamefont
  {McArdle}}, \bibinfo {author} {\bibfnamefont {X.}~\bibnamefont {Yuan}},\ and\
  \bibinfo {author} {\bibfnamefont {S.}~\bibnamefont {Benjamin}},\ }\emph
  {\bibinfo {title} {Error-Mitigated Digital Quantum Simulation}},\ \href
  {https://doi.org/10.1103/PhysRevLett.122.180501} {\bibfield  {journal}
  {\bibinfo  {journal} {Phys. Rev. Lett.}\ }\textbf {\bibinfo {volume} {122}},\
  \bibinfo {pages} {180501} (\bibinfo {year} {2019})}\BibitemShut {NoStop}%
\bibitem [{\citenamefont {Sun}\ \emph {et~al.}(2021)\citenamefont {Sun},
  \citenamefont {Yuan}, \citenamefont {Tsunoda}, \citenamefont {Vedral},
  \citenamefont {Benjamin},\ and\ \citenamefont {Endo}}]{Sun2020continuous}%
  \BibitemOpen
  \bibfield  {author} {\bibinfo {author} {\bibfnamefont {J.}~\bibnamefont
  {Sun}}, \bibinfo {author} {\bibfnamefont {X.}~\bibnamefont {Yuan}}, \bibinfo
  {author} {\bibfnamefont {T.}~\bibnamefont {Tsunoda}}, \bibinfo {author}
  {\bibfnamefont {V.}~\bibnamefont {Vedral}}, \bibinfo {author} {\bibfnamefont
  {S.~C.}\ \bibnamefont {Benjamin}},\ and\ \bibinfo {author} {\bibfnamefont
  {S.}~\bibnamefont {Endo}},\ }\emph {\bibinfo {title} {Mitigating Realistic
  Noise in Practical Noisy Intermediate-Scale Quantum Devices}},\ \href
  {https://doi.org/10.1103/PhysRevApplied.15.034026} {\bibfield  {journal}
  {\bibinfo  {journal} {Phys. Rev. Appl.}\ }\textbf {\bibinfo {volume} {15}},\
  \bibinfo {pages} {034026} (\bibinfo {year} {2021})}\BibitemShut {NoStop}%
\bibitem [{\citenamefont {Song}\ \emph {et~al.}(2019)\citenamefont {Song},
  \citenamefont {Cui}, \citenamefont {Wang}, \citenamefont {Hao}, \citenamefont
  {Feng},\ and\ \citenamefont {Li}}]{Song2019exp}%
  \BibitemOpen
  \bibfield  {author} {\bibinfo {author} {\bibfnamefont {C.}~\bibnamefont
  {Song}}, \bibinfo {author} {\bibfnamefont {J.}~\bibnamefont {Cui}}, \bibinfo
  {author} {\bibfnamefont {H.}~\bibnamefont {Wang}}, \bibinfo {author}
  {\bibfnamefont {J.}~\bibnamefont {Hao}}, \bibinfo {author} {\bibfnamefont
  {H.}~\bibnamefont {Feng}},\ and\ \bibinfo {author} {\bibfnamefont
  {Y.}~\bibnamefont {Li}},\ }\emph {\bibinfo {title} {Quantum computation with
  universal error mitigation on a superconducting quantum processor}},\ \href
  {https://doi.org/10.1126/sciadv.aaw5686} {\bibfield  {journal} {\bibinfo
  {journal} {Sci. Adv.}\ }\textbf {\bibinfo {volume} {5}},\ \bibinfo {pages}
  {eaaw5686} (\bibinfo {year} {2019})}\BibitemShut {NoStop}%
\bibitem [{\citenamefont {Zhang}\ \emph {et~al.}(2020)\citenamefont {Zhang},
  \citenamefont {Lu}, \citenamefont {Zhang}, \citenamefont {Chen},
  \citenamefont {Li}, \citenamefont {Zhang},\ and\ \citenamefont
  {Kim}}]{Zhang2020exp}%
  \BibitemOpen
  \bibfield  {author} {\bibinfo {author} {\bibfnamefont {S.}~\bibnamefont
  {Zhang}}, \bibinfo {author} {\bibfnamefont {Y.}~\bibnamefont {Lu}}, \bibinfo
  {author} {\bibfnamefont {K.}~\bibnamefont {Zhang}}, \bibinfo {author}
  {\bibfnamefont {W.}~\bibnamefont {Chen}}, \bibinfo {author} {\bibfnamefont
  {Y.}~\bibnamefont {Li}}, \bibinfo {author} {\bibfnamefont {J.-N.}\
  \bibnamefont {Zhang}},\ and\ \bibinfo {author} {\bibfnamefont
  {K.}~\bibnamefont {Kim}},\ }\emph {\bibinfo {title} {Error-mitigated quantum
  gates exceeding physical fidelities in a trapped-ion system}},\ \href
  {https://doi.org/10.1038/s41467-020-14376-z} {\bibfield  {journal} {\bibinfo
  {journal} {Nat. Commun.}\ }\textbf {\bibinfo {volume} {11}},\ \bibinfo
  {pages} {587} (\bibinfo {year} {2020})}\BibitemShut {NoStop}%
\bibitem [{\citenamefont {{Czarnik}}\ \emph {et~al.}()\citenamefont
  {{Czarnik}}, \citenamefont {{Arrasmith}}, \citenamefont {{Coles}},\ and\
  \citenamefont {{Cincio}}}]{Czarnik2020clifford}%
  \BibitemOpen
  \bibfield  {author} {\bibinfo {author} {\bibfnamefont {P.}~\bibnamefont
  {{Czarnik}}}, \bibinfo {author} {\bibfnamefont {A.}~\bibnamefont
  {{Arrasmith}}}, \bibinfo {author} {\bibfnamefont {P.~J.}\ \bibnamefont
  {{Coles}}},\ and\ \bibinfo {author} {\bibfnamefont {L.}~\bibnamefont
  {{Cincio}}},\ }\emph {\bibinfo {title} {{Error mitigation with Clifford
  quantum-circuit data}}},\ \href@noop {} {\ }\Eprint
  {https://arxiv.org/abs/2005.10189} {arXiv:2005.10189} \BibitemShut {NoStop}%
\bibitem [{\citenamefont {{Strikis}}\ \emph {et~al.}()\citenamefont
  {{Strikis}}, \citenamefont {{Qin}}, \citenamefont {{Chen}}, \citenamefont
  {{Benjamin}},\ and\ \citenamefont {{Li}}}]{Strikis2020learning}%
  \BibitemOpen
  \bibfield  {author} {\bibinfo {author} {\bibfnamefont {A.}~\bibnamefont
  {{Strikis}}}, \bibinfo {author} {\bibfnamefont {D.}~\bibnamefont {{Qin}}},
  \bibinfo {author} {\bibfnamefont {Y.}~\bibnamefont {{Chen}}}, \bibinfo
  {author} {\bibfnamefont {S.~C.}\ \bibnamefont {{Benjamin}}},\ and\ \bibinfo
  {author} {\bibfnamefont {Y.}~\bibnamefont {{Li}}},\ }\emph {\bibinfo {title}
  {{Learning-based quantum error mitigation}}},\ \href@noop {} {\ }\Eprint
  {https://arxiv.org/abs/2005.07601} {arXiv:2005.07601} \BibitemShut {NoStop}%
\bibitem [{\citenamefont {{Zlokapa}}\ and\ \citenamefont
  {{Gheorghiu}}()}]{Zlokapa2020deep}%
  \BibitemOpen
  \bibfield  {author} {\bibinfo {author} {\bibfnamefont {A.}~\bibnamefont
  {{Zlokapa}}}\ and\ \bibinfo {author} {\bibfnamefont {A.}~\bibnamefont
  {{Gheorghiu}}},\ }\emph {\bibinfo {title} {{A deep learning model for noise
  prediction on near-term quantum devices}}},\ \href@noop {} {\ }\Eprint
  {https://arxiv.org/abs/2005.10811} {arXiv:2005.10811} \BibitemShut {NoStop}%
\bibitem [{\citenamefont {Troyer}\ and\ \citenamefont
  {Wiese}(2005)}]{Troyer2005sign}%
  \BibitemOpen
  \bibfield  {author} {\bibinfo {author} {\bibfnamefont {M.}~\bibnamefont
  {Troyer}}\ and\ \bibinfo {author} {\bibfnamefont {U.-J.}\ \bibnamefont
  {Wiese}},\ }\emph {\bibinfo {title} {Computational Complexity and Fundamental
  Limitations to Fermionic Quantum Monte Carlo Simulations}},\ \href
  {https://doi.org/10.1103/PhysRevLett.94.170201} {\bibfield  {journal}
  {\bibinfo  {journal} {Phys. Rev. Lett.}\ }\textbf {\bibinfo {volume} {94}},\
  \bibinfo {pages} {170201} (\bibinfo {year} {2005})}\BibitemShut {NoStop}%
\bibitem [{\citenamefont {Pashayan}\ \emph {et~al.}(2015)\citenamefont
  {Pashayan}, \citenamefont {Wallman},\ and\ \citenamefont
  {Bartlett}}]{Pashayan2015estimating}%
  \BibitemOpen
  \bibfield  {author} {\bibinfo {author} {\bibfnamefont {H.}~\bibnamefont
  {Pashayan}}, \bibinfo {author} {\bibfnamefont {J.~J.}\ \bibnamefont
  {Wallman}},\ and\ \bibinfo {author} {\bibfnamefont {S.~D.}\ \bibnamefont
  {Bartlett}},\ }\emph {\bibinfo {title} {Estimating Outcome Probabilities of
  Quantum Circuits Using Quasiprobabilities}},\ \href
  {https://doi.org/10.1103/PhysRevLett.115.070501} {\bibfield  {journal}
  {\bibinfo  {journal} {Phys. Rev. Lett.}\ }\textbf {\bibinfo {volume} {115}},\
  \bibinfo {pages} {070501} (\bibinfo {year} {2015})}\BibitemShut {NoStop}%
\bibitem [{\citenamefont {Howard}\ and\ \citenamefont
  {Campbell}(2017)}]{Howard2017application}%
  \BibitemOpen
  \bibfield  {author} {\bibinfo {author} {\bibfnamefont {M.}~\bibnamefont
  {Howard}}\ and\ \bibinfo {author} {\bibfnamefont {E.}~\bibnamefont
  {Campbell}},\ }\emph {\bibinfo {title} {Application of a Resource Theory for
  Magic States to Fault-Tolerant Quantum Computing}},\ \href
  {https://doi.org/10.1103/PhysRevLett.118.090501} {\bibfield  {journal}
  {\bibinfo  {journal} {Phys. Rev. Lett.}\ }\textbf {\bibinfo {volume} {118}},\
  \bibinfo {pages} {090501} (\bibinfo {year} {2017})}\BibitemShut {NoStop}%
\bibitem [{\citenamefont {Seddon}\ \emph {et~al.}(2021)\citenamefont {Seddon},
  \citenamefont {Regula}, \citenamefont {Pashayan}, \citenamefont {Ouyang},\
  and\ \citenamefont {Campbell}}]{Seddon2020quantifying}%
  \BibitemOpen
  \bibfield  {author} {\bibinfo {author} {\bibfnamefont {J.~R.}\ \bibnamefont
  {Seddon}}, \bibinfo {author} {\bibfnamefont {B.}~\bibnamefont {Regula}},
  \bibinfo {author} {\bibfnamefont {H.}~\bibnamefont {Pashayan}}, \bibinfo
  {author} {\bibfnamefont {Y.}~\bibnamefont {Ouyang}},\ and\ \bibinfo {author}
  {\bibfnamefont {E.~T.}\ \bibnamefont {Campbell}},\ }\emph {\bibinfo {title}
  {Quantifying Quantum Speedups: Improved Classical Simulation From Tighter
  Magic Monotones}},\ \href {https://doi.org/10.1103/PRXQuantum.2.010345}
  {\bibfield  {journal} {\bibinfo  {journal} {PRX Quantum}\ }\textbf {\bibinfo
  {volume} {2}},\ \bibinfo {pages} {010345} (\bibinfo {year}
  {2021})}\BibitemShut {NoStop}%
\bibitem [{\citenamefont {{Buscemi}}\ \emph {et~al.}()\citenamefont
  {{Buscemi}}, \citenamefont {{Dall'Arno}}, \citenamefont {{Ozawa}},\ and\
  \citenamefont {{Vedral}}}]{Buscemi2013twopoint}%
  \BibitemOpen
  \bibfield  {author} {\bibinfo {author} {\bibfnamefont {F.}~\bibnamefont
  {{Buscemi}}}, \bibinfo {author} {\bibfnamefont {M.}~\bibnamefont
  {{Dall'Arno}}}, \bibinfo {author} {\bibfnamefont {M.}~\bibnamefont
  {{Ozawa}}},\ and\ \bibinfo {author} {\bibfnamefont {V.}~\bibnamefont
  {{Vedral}}},\ }\emph {\bibinfo {title} {{Direct observation of any two-point
  quantum correlation function}}},\ \href@noop {} {\ }\Eprint
  {https://arxiv.org/abs/1312.4240} {arXiv:1312.4240} \BibitemShut {NoStop}%
\bibitem [{\citenamefont {Buscemi}\ \emph {et~al.}(2014)\citenamefont
  {Buscemi}, \citenamefont {Dall'Arno}, \citenamefont {Ozawa},\ and\
  \citenamefont {Vedral}}]{Buscemi2014twopoint}%
  \BibitemOpen
  \bibfield  {author} {\bibinfo {author} {\bibfnamefont {F.}~\bibnamefont
  {Buscemi}}, \bibinfo {author} {\bibfnamefont {M.}~\bibnamefont {Dall'Arno}},
  \bibinfo {author} {\bibfnamefont {M.}~\bibnamefont {Ozawa}},\ and\ \bibinfo
  {author} {\bibfnamefont {V.}~\bibnamefont {Vedral}},\ }\emph {\bibinfo
  {title} {Universal optimal quantum correlator}},\ \href
  {https://doi.org/10.1142/S0219749915600023} {\bibfield  {journal} {\bibinfo
  {journal} {Int. J. Quantum Inf.}\ }\textbf {\bibinfo {volume} {12}},\
  \bibinfo {pages} {1560002} (\bibinfo {year} {2014})}\BibitemShut {NoStop}%
\bibitem [{\citenamefont {Steane}(1996)}]{Steane1996error}%
  \BibitemOpen
  \bibfield  {author} {\bibinfo {author} {\bibfnamefont {A.~M.}\ \bibnamefont
  {Steane}},\ }\emph {\bibinfo {title} {Error Correcting Codes in Quantum
  Theory}},\ \href {https://doi.org/10.1103/PhysRevLett.77.793} {\bibfield
  {journal} {\bibinfo  {journal} {Phys. Rev. Lett.}\ }\textbf {\bibinfo
  {volume} {77}},\ \bibinfo {pages} {793--797} (\bibinfo {year}
  {1996})}\BibitemShut {NoStop}%
\bibitem [{\citenamefont {Shor}(1995)}]{Shor1995scheme}%
  \BibitemOpen
  \bibfield  {author} {\bibinfo {author} {\bibfnamefont {P.~W.}\ \bibnamefont
  {Shor}},\ }\emph {\bibinfo {title} {Scheme for reducing decoherence in
  quantum computer memory}},\ \href {https://doi.org/10.1103/PhysRevA.52.R2493}
  {\bibfield  {journal} {\bibinfo  {journal} {Phys. Rev. A}\ }\textbf {\bibinfo
  {volume} {52}},\ \bibinfo {pages} {R2493--R2496} (\bibinfo {year}
  {1995})}\BibitemShut {NoStop}%
\bibitem [{\citenamefont {Fowler}\ \emph {et~al.}(2012)\citenamefont {Fowler},
  \citenamefont {Mariantoni}, \citenamefont {Martinis},\ and\ \citenamefont
  {Cleland}}]{Fowler2012surface}%
  \BibitemOpen
  \bibfield  {author} {\bibinfo {author} {\bibfnamefont {A.~G.}\ \bibnamefont
  {Fowler}}, \bibinfo {author} {\bibfnamefont {M.}~\bibnamefont {Mariantoni}},
  \bibinfo {author} {\bibfnamefont {J.~M.}\ \bibnamefont {Martinis}},\ and\
  \bibinfo {author} {\bibfnamefont {A.~N.}\ \bibnamefont {Cleland}},\ }\emph
  {\bibinfo {title} {Surface codes: Towards practical large-scale quantum
  computation}},\ \href {https://doi.org/10.1103/PhysRevA.86.032324} {\bibfield
   {journal} {\bibinfo  {journal} {Phys. Rev. A}\ }\textbf {\bibinfo {volume}
  {86}},\ \bibinfo {pages} {032324} (\bibinfo {year} {2012})}\BibitemShut
  {NoStop}%
\bibitem [{\citenamefont {Chitambar}\ and\ \citenamefont
  {Gour}(2019)}]{Chitambar2019resource}%
  \BibitemOpen
  \bibfield  {author} {\bibinfo {author} {\bibfnamefont {E.}~\bibnamefont
  {Chitambar}}\ and\ \bibinfo {author} {\bibfnamefont {G.}~\bibnamefont
  {Gour}},\ }\emph {\bibinfo {title} {Quantum resource theories}},\ \href
  {https://doi.org/10.1103/RevModPhys.91.025001} {\bibfield  {journal}
  {\bibinfo  {journal} {Rev. Mod. Phys.}\ }\textbf {\bibinfo {volume} {91}},\
  \bibinfo {pages} {025001} (\bibinfo {year} {2019})}\BibitemShut {NoStop}%
\bibitem [{\citenamefont {Vidal}\ and\ \citenamefont
  {Werner}(2002)}]{Vidal2002computable}%
  \BibitemOpen
  \bibfield  {author} {\bibinfo {author} {\bibfnamefont {G.}~\bibnamefont
  {Vidal}}\ and\ \bibinfo {author} {\bibfnamefont {R.~F.}\ \bibnamefont
  {Werner}},\ }\emph {\bibinfo {title} {Computable measure of entanglement}},\
  \href {https://doi.org/10.1103/PhysRevA.65.032314} {\bibfield  {journal}
  {\bibinfo  {journal} {Phys. Rev. A}\ }\textbf {\bibinfo {volume} {65}},\
  \bibinfo {pages} {032314} (\bibinfo {year} {2002})}\BibitemShut {NoStop}%
\bibitem [{\citenamefont {Takagi}\ and\ \citenamefont
  {Zhuang}(2018)}]{Takagi2018convex}%
  \BibitemOpen
  \bibfield  {author} {\bibinfo {author} {\bibfnamefont {R.}~\bibnamefont
  {Takagi}}\ and\ \bibinfo {author} {\bibfnamefont {Q.}~\bibnamefont
  {Zhuang}},\ }\emph {\bibinfo {title} {Convex resource theory of
  non-Gaussianity}},\ \href {https://doi.org/10.1103/PhysRevA.97.062337}
  {\bibfield  {journal} {\bibinfo  {journal} {Phys. Rev. A}\ }\textbf {\bibinfo
  {volume} {97}},\ \bibinfo {pages} {062337} (\bibinfo {year}
  {2018})}\BibitemShut {NoStop}%
\bibitem [{\citenamefont {Albarelli}\ \emph {et~al.}(2018)\citenamefont
  {Albarelli}, \citenamefont {Genoni}, \citenamefont {Paris},\ and\
  \citenamefont {Ferraro}}]{Albarelli2018resource}%
  \BibitemOpen
  \bibfield  {author} {\bibinfo {author} {\bibfnamefont {F.}~\bibnamefont
  {Albarelli}}, \bibinfo {author} {\bibfnamefont {M.~G.}\ \bibnamefont
  {Genoni}}, \bibinfo {author} {\bibfnamefont {M.~G.~A.}\ \bibnamefont
  {Paris}},\ and\ \bibinfo {author} {\bibfnamefont {A.}~\bibnamefont
  {Ferraro}},\ }\emph {\bibinfo {title} {Resource theory of quantum
  non-Gaussianity and Wigner negativity}},\ \href
  {https://doi.org/10.1103/PhysRevA.98.052350} {\bibfield  {journal} {\bibinfo
  {journal} {Phys. Rev. A}\ }\textbf {\bibinfo {volume} {98}},\ \bibinfo
  {pages} {052350} (\bibinfo {year} {2018})}\BibitemShut {NoStop}%
\bibitem [{\citenamefont {Tan}\ \emph {et~al.}(2020)\citenamefont {Tan},
  \citenamefont {Choi},\ and\ \citenamefont {Jeong}}]{Tan2020negativity}%
  \BibitemOpen
  \bibfield  {author} {\bibinfo {author} {\bibfnamefont {K.~C.}\ \bibnamefont
  {Tan}}, \bibinfo {author} {\bibfnamefont {S.}~\bibnamefont {Choi}},\ and\
  \bibinfo {author} {\bibfnamefont {H.}~\bibnamefont {Jeong}},\ }\emph
  {\bibinfo {title} {Negativity of Quasiprobability Distributions as a Measure
  of Nonclassicality}},\ \href {https://doi.org/10.1103/PhysRevLett.124.110404}
  {\bibfield  {journal} {\bibinfo  {journal} {Phys. Rev. Lett.}\ }\textbf
  {\bibinfo {volume} {124}},\ \bibinfo {pages} {110404} (\bibinfo {year}
  {2020})}\BibitemShut {NoStop}%
\bibitem [{\citenamefont {Vidal}\ and\ \citenamefont
  {Tarrach}(1999)}]{Vidal1999robustness}%
  \BibitemOpen
  \bibfield  {author} {\bibinfo {author} {\bibfnamefont {G.}~\bibnamefont
  {Vidal}}\ and\ \bibinfo {author} {\bibfnamefont {R.}~\bibnamefont
  {Tarrach}},\ }\emph {\bibinfo {title} {Robustness of entanglement}},\ \href
  {https://doi.org/10.1103/PhysRevA.59.141} {\bibfield  {journal} {\bibinfo
  {journal} {Phys. Rev. A}\ }\textbf {\bibinfo {volume} {59}},\ \bibinfo
  {pages} {141} (\bibinfo {year} {1999})}\BibitemShut {NoStop}%
\bibitem [{\citenamefont {Yuan}\ \emph {et~al.}(2021)\citenamefont {Yuan},
  \citenamefont {Liu}, \citenamefont {Zhao}, \citenamefont {Regula},
  \citenamefont {Thompson},\ and\ \citenamefont {Gu}}]{Yuan2019memory}%
  \BibitemOpen
  \bibfield  {author} {\bibinfo {author} {\bibfnamefont {X.}~\bibnamefont
  {Yuan}}, \bibinfo {author} {\bibfnamefont {Y.}~\bibnamefont {Liu}}, \bibinfo
  {author} {\bibfnamefont {Q.}~\bibnamefont {Zhao}}, \bibinfo {author}
  {\bibfnamefont {B.}~\bibnamefont {Regula}}, \bibinfo {author} {\bibfnamefont
  {J.}~\bibnamefont {Thompson}},\ and\ \bibinfo {author} {\bibfnamefont
  {M.}~\bibnamefont {Gu}},\ }\emph {\bibinfo {title} {Universal and operational
  benchmarking of quantum memories}},\ \href
  {https://doi.org/10.1038/s41534-021-00444-9} {\bibfield  {journal} {\bibinfo
  {journal} {npj Quantum Inf.}\ }\textbf {\bibinfo {volume} {7}},\ \bibinfo
  {pages} {108} (\bibinfo {year} {2021})}\BibitemShut {NoStop}%
\bibitem [{\citenamefont {Takagi}\ \emph {et~al.}(2019)\citenamefont {Takagi},
  \citenamefont {Regula}, \citenamefont {Bu}, \citenamefont {Liu},\ and\
  \citenamefont {Adesso}}]{Takagi2019operational}%
  \BibitemOpen
  \bibfield  {author} {\bibinfo {author} {\bibfnamefont {R.}~\bibnamefont
  {Takagi}}, \bibinfo {author} {\bibfnamefont {B.}~\bibnamefont {Regula}},
  \bibinfo {author} {\bibfnamefont {K.}~\bibnamefont {Bu}}, \bibinfo {author}
  {\bibfnamefont {Z.-W.}\ \bibnamefont {Liu}},\ and\ \bibinfo {author}
  {\bibfnamefont {G.}~\bibnamefont {Adesso}},\ }\emph {\bibinfo {title}
  {Operational Advantage of Quantum Resources in Subchannel Discrimination}},\
  \href {https://doi.org/10.1103/PhysRevLett.122.140402} {\bibfield  {journal}
  {\bibinfo  {journal} {Phys. Rev. Lett.}\ }\textbf {\bibinfo {volume} {122}},\
  \bibinfo {pages} {140402} (\bibinfo {year} {2019})}\BibitemShut {NoStop}%
\bibitem [{\citenamefont {Takagi}\ and\ \citenamefont
  {Regula}(2019)}]{Takagi2019general}%
  \BibitemOpen
  \bibfield  {author} {\bibinfo {author} {\bibfnamefont {R.}~\bibnamefont
  {Takagi}}\ and\ \bibinfo {author} {\bibfnamefont {B.}~\bibnamefont
  {Regula}},\ }\emph {\bibinfo {title} {General Resource Theories in Quantum
  Mechanics and Beyond: Operational Characterization via Discrimination
  Tasks}},\ \href {https://doi.org/10.1103/PhysRevX.9.031053} {\bibfield
  {journal} {\bibinfo  {journal} {Phys. Rev. X}\ }\textbf {\bibinfo {volume}
  {9}},\ \bibinfo {pages} {031053} (\bibinfo {year} {2019})}\BibitemShut
  {NoStop}%
\bibitem [{\citenamefont {Yunger~Halpern}\ and\ \citenamefont
  {Limmer}(2020)}]{Halpern2020photoisomer}%
  \BibitemOpen
  \bibfield  {author} {\bibinfo {author} {\bibfnamefont {N.}~\bibnamefont
  {Yunger~Halpern}}\ and\ \bibinfo {author} {\bibfnamefont {D.~T.}\
  \bibnamefont {Limmer}},\ }\emph {\bibinfo {title} {Fundamental limitations on
  photoisomerization from thermodynamic resource theories}},\ \href
  {https://doi.org/10.1103/PhysRevA.101.042116} {\bibfield  {journal} {\bibinfo
   {journal} {Phys. Rev. A}\ }\textbf {\bibinfo {volume} {101}},\ \bibinfo
  {pages} {042116} (\bibinfo {year} {2020})}\BibitemShut {NoStop}%
\bibitem [{\citenamefont {Takagi}(2019)}]{Takagi2018skew}%
  \BibitemOpen
  \bibfield  {author} {\bibinfo {author} {\bibfnamefont {R.}~\bibnamefont
  {Takagi}},\ }\emph {\bibinfo {title} {Skew informations from an operational
  view via resource theory of asymmetry}},\ \href
  {https://doi.org/10.1038/s41598-019-50279-w} {\bibfield  {journal} {\bibinfo
  {journal} {Sci. Rep.}\ }\textbf {\bibinfo {volume} {9}},\ \bibinfo {pages}
  {14562} (\bibinfo {year} {2019})}\BibitemShut {NoStop}%
\bibitem [{\citenamefont {Fang}\ and\ \citenamefont
  {Liu}(2020)}]{Fang2019nogo}%
  \BibitemOpen
  \bibfield  {author} {\bibinfo {author} {\bibfnamefont {K.}~\bibnamefont
  {Fang}}\ and\ \bibinfo {author} {\bibfnamefont {Z.-W.}\ \bibnamefont {Liu}},\
  }\emph {\bibinfo {title} {No-Go Theorems for Quantum Resource
  Purification}},\ \href {https://doi.org/10.1103/PhysRevLett.125.060405}
  {\bibfield  {journal} {\bibinfo  {journal} {Phys. Rev. Lett.}\ }\textbf
  {\bibinfo {volume} {125}},\ \bibinfo {pages} {060405} (\bibinfo {year}
  {2020})}\BibitemShut {NoStop}%
\bibitem [{\citenamefont {Takagi}\ \emph {et~al.}(2020)\citenamefont {Takagi},
  \citenamefont {Wang},\ and\ \citenamefont {Hayashi}}]{Takagi2020application}%
  \BibitemOpen
  \bibfield  {author} {\bibinfo {author} {\bibfnamefont {R.}~\bibnamefont
  {Takagi}}, \bibinfo {author} {\bibfnamefont {K.}~\bibnamefont {Wang}},\ and\
  \bibinfo {author} {\bibfnamefont {M.}~\bibnamefont {Hayashi}},\ }\emph
  {\bibinfo {title} {Application of the Resource Theory of Channels to
  Communication Scenarios}},\ \href
  {https://doi.org/10.1103/PhysRevLett.124.120502} {\bibfield  {journal}
  {\bibinfo  {journal} {Phys. Rev. Lett.}\ }\textbf {\bibinfo {volume} {124}},\
  \bibinfo {pages} {120502} (\bibinfo {year} {2020})}\BibitemShut {NoStop}%
\bibitem [{\citenamefont {Zhou}\ \emph {et~al.}(2021)\citenamefont {Zhou},
  \citenamefont {Liu},\ and\ \citenamefont {Jiang}}]{Zhou2020errorcorrection}%
  \BibitemOpen
  \bibfield  {author} {\bibinfo {author} {\bibfnamefont {S.}~\bibnamefont
  {Zhou}}, \bibinfo {author} {\bibfnamefont {Z.-W.}\ \bibnamefont {Liu}},\ and\
  \bibinfo {author} {\bibfnamefont {L.}~\bibnamefont {Jiang}},\ }\emph
  {\bibinfo {title} {New perspectives on covariant quantum error correction}},\
  \href {https://doi.org/10.22331/q-2021-08-09-521} {\bibfield  {journal}
  {\bibinfo  {journal} {{Quantum}}\ }\textbf {\bibinfo {volume} {5}},\ \bibinfo
  {pages} {521} (\bibinfo {year} {2021})}\BibitemShut {NoStop}%
\bibitem [{\citenamefont {Hoeffding}(1963)}]{Hoeffding1963probability}%
  \BibitemOpen
  \bibfield  {author} {\bibinfo {author} {\bibfnamefont {W.}~\bibnamefont
  {Hoeffding}},\ }\emph {\bibinfo {title} {Probability Inequalities for Sums of
  Bounded Random Variables}},\ \href
  {https://doi.org/10.1080/01621459.1963.10500830} {\bibfield  {journal}
  {\bibinfo  {journal} {J. Am. Stat. Assoc.}\ }\textbf {\bibinfo {volume}
  {58}},\ \bibinfo {pages} {13} (\bibinfo {year} {1963})}\BibitemShut {NoStop}%
\bibitem [{\citenamefont {Chamberland}\ \emph {et~al.}(2016)\citenamefont
  {Chamberland}, \citenamefont {Jochym-O'Connor},\ and\ \citenamefont
  {Laflamme}}]{Chamberland2016threshold}%
  \BibitemOpen
  \bibfield  {author} {\bibinfo {author} {\bibfnamefont {C.}~\bibnamefont
  {Chamberland}}, \bibinfo {author} {\bibfnamefont {T.}~\bibnamefont
  {Jochym-O'Connor}},\ and\ \bibinfo {author} {\bibfnamefont {R.}~\bibnamefont
  {Laflamme}},\ }\emph {\bibinfo {title} {Thresholds for Universal Concatenated
  Quantum Codes}},\ \href {https://doi.org/10.1103/PhysRevLett.117.010501}
  {\bibfield  {journal} {\bibinfo  {journal} {Phys. Rev. Lett.}\ }\textbf
  {\bibinfo {volume} {117}},\ \bibinfo {pages} {010501} (\bibinfo {year}
  {2016})}\BibitemShut {NoStop}%
\bibitem [{\citenamefont {Takagi}\ \emph {et~al.}(2017)\citenamefont {Takagi},
  \citenamefont {Yoder},\ and\ \citenamefont {Chuang}}]{Takagi2017error}%
  \BibitemOpen
  \bibfield  {author} {\bibinfo {author} {\bibfnamefont {R.}~\bibnamefont
  {Takagi}}, \bibinfo {author} {\bibfnamefont {T.~J.}\ \bibnamefont {Yoder}},\
  and\ \bibinfo {author} {\bibfnamefont {I.~L.}\ \bibnamefont {Chuang}},\
  }\emph {\bibinfo {title} {Error rates and resource overheads of encoded
  three-qubit gates}},\ \href {https://doi.org/10.1103/PhysRevA.96.042302}
  {\bibfield  {journal} {\bibinfo  {journal} {Phys. Rev. A}\ }\textbf {\bibinfo
  {volume} {96}},\ \bibinfo {pages} {042302} (\bibinfo {year}
  {2017})}\BibitemShut {NoStop}%
\bibitem [{\citenamefont {Liu}\ and\ \citenamefont
  {Yuan}(2020)}]{Liu2020operational}%
  \BibitemOpen
  \bibfield  {author} {\bibinfo {author} {\bibfnamefont {Y.}~\bibnamefont
  {Liu}}\ and\ \bibinfo {author} {\bibfnamefont {X.}~\bibnamefont {Yuan}},\
  }\emph {\bibinfo {title} {Operational resource theory of quantum channels}},\
  \href {https://doi.org/10.1103/PhysRevResearch.2.012035} {\bibfield
  {journal} {\bibinfo  {journal} {Phys. Rev. Research}\ }\textbf {\bibinfo
  {volume} {2}},\ \bibinfo {pages} {012035} (\bibinfo {year}
  {2020})}\BibitemShut {NoStop}%
\bibitem [{\citenamefont {Liu}\ and\ \citenamefont
  {Winter}()}]{liu2019resource}%
  \BibitemOpen
  \bibfield  {author} {\bibinfo {author} {\bibfnamefont {Z.-W.}\ \bibnamefont
  {Liu}}\ and\ \bibinfo {author} {\bibfnamefont {A.}~\bibnamefont {Winter}},\
  }\emph {\bibinfo {title} {Resource theories of quantum channels and the
  universal role of resource erasure}},\ \href@noop {} {\ }\Eprint
  {https://arxiv.org/abs/1904.04201} {arXiv:1904.04201} \BibitemShut {NoStop}%
\bibitem [{\citenamefont {Regula}(2017)}]{Regula2017convex}%
  \BibitemOpen
  \bibfield  {author} {\bibinfo {author} {\bibfnamefont {B.}~\bibnamefont
  {Regula}},\ }\emph {\bibinfo {title} {Convex geometry of quantum resource
  quantification}},\ \href {https://doi.org/10.1088/1751-8121/aa9100}
  {\bibfield  {journal} {\bibinfo  {journal} {J. Phys. A}\ }\textbf {\bibinfo
  {volume} {51}},\ \bibinfo {pages} {045303} (\bibinfo {year}
  {2017})}\BibitemShut {NoStop}%
\bibitem [{\citenamefont {Boyd}\ and\ \citenamefont
  {Vandenberghe}(2004)}]{boyd_2004}%
  \BibitemOpen
  \bibfield  {author} {\bibinfo {author} {\bibfnamefont {S.}~\bibnamefont
  {Boyd}}\ and\ \bibinfo {author} {\bibfnamefont {L.}~\bibnamefont
  {Vandenberghe}},\ }\href@noop {} {\emph {\bibinfo {title} {Convex
  {{Optimization}}}}}\ (\bibinfo  {publisher} {{Cambridge University Press}},\
  \bibinfo {address} {New York},\ \bibinfo {year} {2004})\BibitemShut {NoStop}%
\bibitem [{\citenamefont {{Watrous}}(2009)}]{Watrous2009SDP}%
  \BibitemOpen
  \bibfield  {author} {\bibinfo {author} {\bibfnamefont {J.}~\bibnamefont
  {{Watrous}}},\ }\emph {\bibinfo {title} {{Semidefinite Programs for
  Completely Bounded Norms}}},\ \href
  {https://doi.org/10.4086/toc.2009.v005a011} {\bibfield  {journal} {\bibinfo
  {journal} {Theory Comput.}\ }\textbf {\bibinfo {volume} {5}},\ \bibinfo
  {pages} {217} (\bibinfo {year} {2009})}\BibitemShut {NoStop}%
\end{thebibliography}%

\end{document}